\DeclareMathOperator{\polylog}{polylog}
\DeclareMathOperator{\polyloglog}{polyloglog}
\tikzset{
  treenode/.style = {align=center, inner sep=2pt, text centered,
    font=\sffamily},
  arn_r/.style = {treenode, circle, black, font=\sffamily\bfseries, draw=black,
    text width=1.5em},
    arn_t/.style = {treenode, circle, black, thick, double, font=\sffamily\bfseries, draw=black,
    text width=1.5em},
  every edge/.append style={anchor=south,auto=falseanchor=south,auto=false,font=3.5 em},
}
\def\dd{\mathinner{.\,.}}
\newcommand{\cO}{\mathcal{O}}
\newcommand{\Oh}{\cO}
\newcommand{\cOtilde}{\tilde{\mathcal{O}}}
\newcommand{\Ohtilde}{\cOtilde}
\newcommand{\LCA}{\textsf{LCA}}
\newcommand{\weight}{\operatorname{weight}}
\newcommand{\pre}{\operatorname{pre}}
\newcommand{\Pred}{\textsc{PredecessorQuery}\xspace}
\newcommand{\II}{\mathcal{I}}
\newcommand{\enumcases}[1]{
\begin{enumerate}[label=Case \arabic*:,leftmargin=42pt]
 #1
\end{enumerate}
}
 \newcommand{\defDSproblem}[3]{
  \vspace{2mm}
\noindent\fbox{
  \begin{minipage}{0.96\textwidth}
  #1\\
  {\bf{Input:}} #2  \\
  {\bf{Query:}} #3
  \end{minipage}
  }
  \vspace{2mm}
}
  \newtheorem{theorem}{Theorem}
  \newtheorem{lemma}[theorem]{Lemma}
  \newtheorem{corollary}[theorem]{Corollary}
  \newtheorem{proposition}[theorem]{Proposition}
  \newtheorem{definition}[theorem]{Definition}
  \newtheorem{property}[theorem]{Property}
\begin{document}

\title{Optimal Heaviest Induced Ancestors}

\author[1]{Panagiotis Charalampopoulos}
\author[2]{Bart{\l}omiej Dudek}
\author[2]{Pawe{\l} Gawrychowski}
\author[2]{Karol Pokorski}

\affil[1]{Birkbeck, University of London, UK\\}
\affil[2]{Institute of Computer Science, University of Wroc{\l}aw, Poland}
\date{}
\maketitle

\begin{abstract}
We revisit the Heaviest Induced Ancestors (HIA) problem that was introduced by Gagie, Gawrychowski, and Nekrich [CCCG 2013] and has a number of applications in string algorithms.
Let $T_1$ and $T_2$ be two rooted trees whose nodes have weights that are increasing in all root-to-leaf paths, and labels on the leaves, such that no two leaves of a tree have the same label.
A pair of nodes $(u, v)\in T_1 \times T_2$ is \emph{induced} if and only if there is a label shared by leaf-descendants of $u$ and $v$.
In an HIA query, given nodes $x \in T_1$ and $y \in T_2$,
the goal is to find an induced pair of nodes $(u, v)$ of the maximum total weight such that $u$ is an ancestor of~$x$
and $v$ is an ancestor of $y$.

Let $n$ be the upper bound on the sizes of the two trees.
It is known that no data structure of size $\cOtilde(n)$ can answer HIA queries in
$o(\log n / \log \log n)$ time [Charalampopoulos, Gawrychowski, Pokorski; ICALP 2020].\footnote{The $\cOtilde(\cdot)$ notation hides factors polylogarithmic in $n$.}
This (unconditional) lower bound is a $\polyloglog n$ factor away from the query time of the fastest $\cOtilde(n)$-size data structure known to date for the HIA problem [Abedin, Hooshmand, Ganguly, Thankachan; Algorithmica 2022].
In this work, we resolve the query-time complexity of the HIA problem for the near-linear space regime by presenting
a data structure that can be built in  $\cOtilde(n)$ time and answers HIA queries in $\cO(\log n/\log\log n)$ time.
As a direct corollary, we obtain an $\cOtilde(n)$-size data structure that maintains the LCS of a static string
and a dynamic string, both of length at most $n$, in time optimal for this space regime.

The main ingredients of our approach are fractional cascading and the utilization of an $\cO(\log n/ \log\log n)$-depth tree decomposition.
The latter allows us to break through the $\Omega(\log n)$ barrier faced by previous works, due to the depth of the considered heavy-path decompositions.
\end{abstract}

\section{Introduction}

The solutions to algorithmic problems on texts frequently involve the construction of text indexes
that can be built efficiently and offer a broad functionality, without significantly increasing space usage.
A prime example of such an index is the suffix tree, which is ubiquitous in stringology.
The work of Weiner \cite{Weiner73} that introduced it, showed that it can be used to efficiently solve a number of fundamental open problems such as the computation of occurrences of patterns (given in an online manner) in a text or the computation of the longest common substring of two strings.
However, it is usually the case that a suffix tree needs to first be augmented with other data structures before it can efficiently answer more sophisticated queries, e.g., returning the longest common prefix of two substrings or the longest palindrome centered at some position; an augmentation with a lowest common ancestors data structure suffices for these examples \cite{gusfield1997,HarelT84}.

Crucially, a text index, such as the suffix tree, is built once and can then be queried an arbitrary number of times.
This is increasingly relevant: in many real-world scenarios, large pieces
of information are stored on servers and are constantly queried by a large number of remote clients.
From this perspective, it makes sense to devote some time to preprocess the data stored on the server in order
to be able to provide quick responses to remote users later.

The \emph{Heaviest Induced Ancestors} problem, which was introduced by
Gagie et al.~\cite{DBLP:conf/cccg/GagieGN13} and is defined next, has been
proved to be useful in solving several variants of the problem of computing a longest common
substring of two strings~\cite{DBLP:conf/cccg/GagieGN13,Amir2017,DBLP:journals/algorithmica/AmirCPR20,DBLP:conf/cpm/AbedinH0T18,charalampopoulos_et_al}.

We say that a tree is weighted if there is a weight associated with each
node $u$ of the tree, such that weights along root-to-leaf paths are
increasing, i.e., for any node $u$ other than the root the weight of $u$
is larger than the weight of $u$'s parent.
Further, we say that a tree is labelled if each of its leaves is given a distinct
label from $[n]$, where $n$ is the number of leaves.
As an example of a rooted, weighted, and labelled tree, consider the suffix tree of a string $S\$$, where
$\$$ does not have any occurrence in $S$, with the label of each leaf being the starting position of the corresponding suffix
and the weight of each node being the length of the string it represents.

\begin{definition}
  For two rooted and weighted trees $T_1$ and $T_2$ on $n$ leaves,
  we say that two nodes $u \in T_1$ and $v \in T_2$,
  are \emph{induced} (by label $\ell$) if and only if there are
  leaves $x$ and $y$ labelled with $\ell$, such that $x$ and $y$ are weak descendants of $u$ and $v$, respectively.
\end{definition}

\defDSproblem{\textsc{Heaviest Induced Ancestors} (\textsc{HIA})}
{Two rooted, weighted, and labelled trees $T_1$ and $T_2$ on $n$ leaves.}
{Given a pair of nodes $u \in T_1$ and $v \in T_2$, return a pair of induced nodes $(u', v')$ with the largest total weight, such that $u'$ is an ancestor of $u$, $v'$ is an ancestor of $v$.}

\subparagraph{Previous results and our contribution.}
Table~\ref{table:results} shows the state-of-the-art size vs.~query-time tradeoffs for the HIA problem prior
to our work and our result.
Gagie et al.~\cite{DBLP:conf/cccg/GagieGN13} presented several tradeoffs which have been since improved.
We stress that the $\cO(n\log^2 n)$-size data structure with query-time $\cO(\log n)$ included in Table \ref{table:results} was only sketched in~\cite{DBLP:conf/cccg/GagieGN13}. We briefly discuss this sketch in \cref{app:fc}, as some of the ideas involved are similar to the ones we use.
The remaining $\cOtilde(n)$-size known data structures found in Table~\ref{table:results} are due to Abedin et al.~\cite{DBLP:conf/cpm/AbedinH0T18}.
Charalampopoulos et al.~\cite{charalampopoulos_et_al} showed an unconditional lower bound for near-linear size data structures and a data structure with query-time $\cO(1)$ and size $\cO(n^{1+\epsilon})$ for any constant $\epsilon>0$.
We now formally state our main result, which matches the lower bound of \cite{charalampopoulos_et_al}.

\begin{theorem}\label{thm:final}
  There is an $\cOtilde(n)$-size data structure for the HIA problem that can be constructed in $\cOtilde(n)$ time and answers queries in $\cO(\log n / \log \log n)$ time.
  \label{theorem:final}
\end{theorem}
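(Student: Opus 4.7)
The plan is to combine three ingredients: (1) a tree decomposition of each of $T_1,T_2$ whose depth is $\cO(\log n/\log\log n)$ rather than the $\Theta(\log n)$ depth of heavy-path decomposition; (2) for each pair of ``pieces'' (one from $T_1$ and one from $T_2$) a compact structure that can answer a restricted HIA query inside the pair; and (3) fractional cascading across the sequence of pieces visited by a query so that the total time sums to $\cO(\log n/\log\log n)$ rather than the naive $\cO((\log n/\log\log n)^2)$.

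First, I would build a weight/size balanced decomposition: at each node $v$ of subtree size $s(v)$, call a child $c$ \emph{heavy} if $s(c)\ge s(v)/b$ for a parameter $b=\Theta(\log n)$ and otherwise \emph{light}. On any root-to-leaf path the number of light edges is $\cO(\log n/\log\log n)$ because $s(\cdot)$ drops by a factor of $b$ across each light edge. Grouping the heavy children gives a macro-structure in which each of the two trees is partitioned into $\cOtilde(n)$ \emph{chunks}, the chunk-tree has depth $\cO(\log n/\log\log n)$, and every ancestor chain of a query node $x\in T_1$ is covered by $\cO(\log n/\log\log n)$ chunks in a canonical way; likewise for $y\in T_2$.

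Next, for every pair $(C_1,C_2)$ of chunks that can be simultaneously visited by a query, I would precompute an auxiliary structure storing those labels $\ell$ that induce pairs in $C_1\times C_2$, indexed by the position of the leaf labelled $\ell$ along $C_1$ (respectively $C_2$). Given entry points in $C_1$ and $C_2$, one wants the maximum-weight induced pair of ancestors inside $C_1\times C_2$: this is a $2$-dimensional dominance maximum on weights that can be preprocessed so a single query is answered in $\cO(1)$ time, analogously to the chunk-level gadgets used in the previous $\cOtilde(n)$-space solutions. A careful accounting of how many chunk pairs can actually be queried together (the pairs lying on the macro-paths of a potential query) should keep the space at $\cOtilde(n)$; this is the standard argument that each label contributes to $\cO(\log^2 n)$ pairs in classic heavy-path schemes, and here to $\cO((\log n/\log\log n)^2)$ pairs per pair of endpoints, summing to $\cOtilde(n)$ total.

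The main obstacle I expect is step (3): stitching the $\cO(\log n/\log\log n)$ chunk pairs together so that the overall query runs in $\cO(\log n/\log\log n)$ time rather than paying an $\cO(\log n)$ predecessor search for each chunk pair. To this end I would set up a fractional-cascading catalogue along the macro-ancestor chains: as the query descends through the chunks of $T_1$ and $T_2$, the predecessor positions needed in the next chunk pair are computed in $\cO(1)$ from those of the previous pair through precomputed bridges between adjacent catalogues, initialised by a single $\cO(\log n/\log\log n)$-time search at the top of the chain (for instance by sitting each catalogue inside a $\cO(\log n/\log\log n)$-time predecessor structure over a polylogarithmic universe, using $y$-fast-trie-style ideas to avoid the $\Theta(\log n)$ search cost). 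Combining these bridges with the $\cO(1)$-time per-pair gadget yields $\cO(\log n/\log\log n)$ query time, matching the lower bound of \cite{charalampopoulos_et_al}. The final step is to verify that the catalogues, bridges and gadgets can all be built in $\cOtilde(n)$ time and space, which follows from the chunk-size bounds together with standard word-RAM tricks.
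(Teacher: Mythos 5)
Your high-level plan---a depth-$\cO(\log n/\log\log n)$ decomposition, per-piece-pair precomputed structures, and fractional cascading to share the predecessor searches---matches the paper's strategy, but the proposal has three concrete gaps. First, the decomposition. Your rule ``$c$ is heavy if $s(c)\ge s(v)/b$'' is a \emph{local} criterion, and a heavy component under it can have $\Omega(n)$ leaves: in a $b$-ary tree where every internal node's children split the subtree evenly, every edge is heavy, so the entire tree is one piece, and then the per-pair structures cannot be stored in $\cOtilde(n)$ space. The paper instead uses a layer-based (ART-style) rule: $v$ is on layer $k$ if $n/b^{k+1}<s(v)\le n/b^k$, and an edge is heavy only if both endpoints lie in the same layer. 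This globally bounds each heavy tree to at most $b$ leaves and hence $\cO(b)$ branches, which is exactly what drives the space bound.

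Second, the restricted query inside a pair of pieces is underspecified and, as stated, would fail. You index a piece ``by the position of the leaf labelled $\ell$ along $C_1$,'' but $C_1$ is a tree, not a path, so there is no single ``position.'' The prior near-linear solutions you invoke are built on heavy \emph{paths}; lifting them to heavy \emph{trees} is the heart of the paper's technical work: each heavy tree is decomposed into $\cO(b)$ branches, and for each relevant pair of branches $(e_1,e_2)$ a point set $D[e_1,e_2]$ is built from $(\pre(\LCA(\ell_1,\cdot)),\pre(\LCA(\ell_2,\cdot)))$ for matching labels; after removing dominated points the set becomes a staircase, and the four relative placements of the query entry points versus the staircase are resolved by predecessor queries plus an RMQ. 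In particular a restricted HIA query is not $\cO(1)$---it costs two predecessor queries, which is precisely why fractional cascading is needed. Third, your fractional-cascading step omits the two ideas that make it work here. All cascaded predecessor queries must target the same key; the paper arranges this by reordering children so heavy edges come last, which lets $\pre(v_\star)$ stand in for $\pre(x_\star)$ in every branch-pair structure along the query path. And the natural catalog graph (Cartesian product of piece graphs) can have $\Omega(n)$ degree at a single node, since a branch may have $\Omega(n)$ child heavy trees, violating the polylogarithmic-degree requirement of word-RAM fractional cascading; the paper fixes this by interposing interval-node gadgets that cap the degree at $\cO(\log n)$ without increasing the $\cO(\log n/\log\log n)$ depth. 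Your ``$y$-fast-trie-style ideas over a polylogarithmic universe'' addresses neither of these points.
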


\begin{table}[t]
\centering
\renewcommand{\arraystretch}{1.2}
\begin{tabular}{| c | c | c |}
  \hline
 {\bf Size} & {\bf Query time} & {\bf Paper}\\ \hline
  $\cOtilde(n)$ & $\Omega(\log n/\log \log n)$ & \cite{charalampopoulos_et_al} \\ \hline
  $\cO(n)$ & $\cO ( \log^2 n / \log\log n )$ & \cite{DBLP:conf/cpm/AbedinH0T18} \\ \hline
  $\cO(n\log n)$ & $\cO(\log n \log\log n)$ & \cite{DBLP:conf/cpm/AbedinH0T18} \\ \hline
  $\cO(n\log^2 n)$ & $\cO(\log n)$ & sketched in \cite{DBLP:conf/cccg/GagieGN13}, see \cref{app:fc} \\ \hline
  $\cOtilde(n)$ & $\cO(\log n/ \log\log n)$ & \emph{this work}\\ \hline
  $\cO(n^{1+\epsilon})$ & $\cO(1)$ & \cite{charalampopoulos_et_al} \\ \hline
\end{tabular}
\caption{Size vs.~query-time tradeoffs for the HIA problem; the size is measured in machine words.}
\label{table:results}
\vspace{-.25cm}
\end{table}

\subparagraph{Applications of HIA.}
Before discussing some concrete applications of the HIA problem in string algorithms and the consequences of our results for them, we
give a high-level description of how the HIA problem comes up in variants of computing an LCS.

Consider a string $S$ and a chosen subset $A$ of its positions, that we call \emph{anchors}.
Further, consider the following two tries: a trie $\mathcal{T}^{\leftarrow}$ for the strings in $\{S[1\dd k-1]^R: k \in A\}$, where~$U^R$ denotes the reversal of $U$,
and a trie $\mathcal{T}^{\rightarrow}$ for the strings in $\{S[k\dd |S|]: k \in A\}$.
In other words, for every anchor $k \in A$, we have a path in the
first trie for every prefix of $S[1 \dd k-1]^R$ and a path in the second trie for every prefix of $S[k \dd |S|]$.
We label each leaf of the two tries with the anchor it corresponds to.
Now, observe that a substring $S[i \dd j]$ that crosses an anchor $k$, i.e., $i < k \le j$,
corresponds to an induced pair of nodes in the tries. Indeed, there is a path
representing $S[i \dd k-1]^R$ in the first trie and a path representing
$S[k \dd j]$ in the second trie. An illustration of this idea is provided in
Figure \ref{fig:anchoring}.
The set of anchors and the HIA queries performed in an application of this technique depends on the
specific problem it is used for.
For some of the usages, one may consider using a compressed form of tries \cite{patricia}.

\begin{figure}[t]
\begin{center}
\scalebox{1}{\usetikzlibrary{arrows}
\usetikzlibrary{shapes}
\begin{tikzpicture}
    
\end{tikzpicture}
\begin{tikzpicture}[scale=0.35]
    \draw (0,0) -- (-8.5,5) -- (-8.5,-5) -- (0,0);
    \draw (0,0) -- (8.5,5) -- (8.5,-5) -- (0,0);
    \node at (-2,2) {$\mathcal{T}^{\leftarrow}$};
    \node at (2,2) {$\mathcal{T}^{\rightarrow}$};
    \node[fill,circle,inner sep=0.01cm,minimum size=0.1cm,label={[label distance=0cm]270:{$u$}}] (U) at (-5,1) {};
    \node[fill,circle,inner sep=0.01cm,minimum size=0.1cm,label={[label distance=0cm]270:{$v$}}] (V) at (5,-1) {};
    \draw[thick,magenta] (0,0) .. controls +(-1.5,-0.5) and +(1.5,0.5) .. (U);
    \draw[thick,green] (0,0) .. controls +(1.5,0.5) and +(-1.5,-0.5) .. (V);
    \node[fill,circle,inner sep=0.01cm,minimum size=0.1cm,label={[label distance=-0.1cm]30:{\color{red}$k$}}] (l1) at (-8.5,1.5) {};
    \node[fill,circle,inner sep=0.01cm,minimum size=0.1cm,label={[label distance=-0.1cm]150:{\color{red}$k$}}] (r1) at (8.5,-0.5) {};
    \draw[dotted] (U) .. controls +(-1,0.5) and +(1,-0.5) .. (l1);
    \draw[dotted] (V) .. controls +(1,0.5) and +(-1,-0.5) .. (r1);
    \draw (U) -- (-8.5,3.5) -- (-8.5,-1.5) -- (U);
    \draw (V) -- (8.5,1.5) -- (8.5,-3.5) -- (V);
\end{tikzpicture}}
\caption{An illustration of the anchoring technique for LCS computation,
with the constructed tries $\mathcal{T}^{\leftarrow}$ and $\mathcal{T}^{\rightarrow}$ drawn so that their roots are attached (in the middle).
Any substring anchored at $k$, can be obtained by reading in a left-to-right manner the edge-labels from some node $u \in \mathcal{T}^{\leftarrow}$ to some node $v \in \mathcal{T}^{\rightarrow}$, that both have a leaf-descendant labelled with $k$.}
\label{fig:anchoring}
\end{center}
\vspace{-.5cm}
\end{figure}
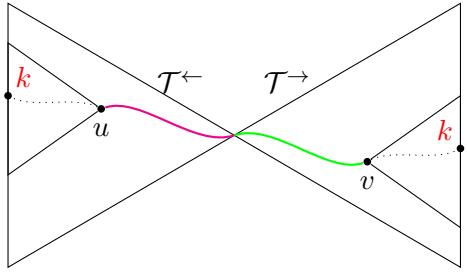

As a first application, consider the maintenance of an LCS of a static string $T$ and a dynamic string~$S$.
By plugging our HIA data structure into the approach of \cite{charalampopoulos_et_al}, we obtain the following result, improving the state-of-the-art by $\polyloglog n$ factors,
and matching the lower bound for the update-time when nearly-linear space is available \cite[Theorem 1]{charalampopoulos_et_al}.

\begin{corollary}\label{cor:onesided}
We can maintain an LCS of a dynamic string $S$ and a static string $T$, each of length at most $n$, in $\cO(\log n/\log\log n)$ time per substitution operation using $\cOtilde(n)$ space, after an $\cOtilde(n)$-time preprocessing.
\end{corollary}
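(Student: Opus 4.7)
The plan is to invoke the black-box reduction of Charalampopoulos et al.~\cite{charalampopoulos_et_al} from maintaining an LCS of a static string $T$ and a dynamic string $S$ under substitutions to static HIA queries, and to substitute our improved data structure of Theorem~\ref{thm:final} for the HIA subroutine used there. In that reduction, a pair of static trees (derived from the suffix tree of the static string $T$ and equipped with an HIA data structure) is built once during an $\cOtilde(n)$-time preprocessing; the dynamic string $S$ is maintained by a suite of internal pattern matching structures (weighted ancestor, longest common extension, persistent/dynamic tries, etc.) that support substitutions in $\cOtilde(1)$ time per operation; and after each substitution, $\cOtilde(1)$ HIA queries are issued at appropriately located query nodes to identify the new LCS.

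Concretely, the reduction rests on an anchoring technique in the spirit of \cref{fig:anchoring}: for an LCS spanning an anchor position $k$, the left part (reversed) and the right part both trace root-to-node paths in two tries built from substrings of the static string $T$. The length of the longest such LCS is exactly the combined weight of the deepest pair of induced ancestors of the two trie nodes where the current characters of $S$ around~$k$ ``land,'' which is precisely an HIA query. A polylogarithmic set of anchors ensures that the overall LCS is one of the anchored LCSs, and each substitution of $S$ changes the landing nodes for only $\cOtilde(1)$ anchors. Using Theorem~\ref{thm:final}, each of these $\cOtilde(1)$ HIA queries costs $\cO(\log n/\log\log n)$, while the internal pattern matching updates contribute $\cOtilde(1)$ time, so the amortized per-substitution cost is $\cO(\log n/\log\log n)$.

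The main obstacle will be verifying that our \emph{static} HIA data structure from \cref{thm:final} is compatible with the reduction, i.e., that the underlying HIA trees themselves do not need to be updated under substitutions of $S$. This is the case because the trees used by the reduction of~\cite{charalampopoulos_et_al} are constructed from the suffix tree of $T$, which is static, and each substitution in $S$ only relocates the two query nodes rather than modifying the trees or their weights. Everything else---preprocessing of $T$, maintenance of $S$ under substitutions, and recomputation of the current LCS from the answers to the $\cOtilde(1)$ HIA queries---is handled in $\cOtilde(n)$ space and $\cOtilde(1)$ per-substitution time by standard tools already used in~\cite{charalampopoulos_et_al}. Since the matching lower bound of $\Omega(\log n/\log\log n)$ on the update time for near-linear space is already established in \cite[Theorem~1]{charalampopoulos_et_al}, the update time in the corollary is optimal in this space regime.
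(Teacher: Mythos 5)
Your proposal takes exactly the paper's route: the paper offers no proof beyond ``by plugging our HIA data structure into the approach of \cite{charalampopoulos_et_al}'', and your black-box substitution of \cref{thm:final} into that reduction is precisely what is intended. One small accounting slip: you claim $\cOtilde(1)$ HIA queries and $\cOtilde(1)$ auxiliary work per substitution and then conclude a total of $\cO(\log n/\log\log n)$, which does not follow ($\cOtilde(1)$ may hide arbitrary polylogarithmic factors); to actually obtain the stated bound you need the reduction of \cite{charalampopoulos_et_al} to issue only $\cO(1)$ HIA queries per update with all remaining per-update work bounded by $\cO(\log n/\log\log n)$, which is indeed what that reduction provides.
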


Further, the authors of \cite{DBLP:conf/cccg/GagieGN13} (implicitly) reduced to the HIA problem,
the problem of preprocessing a text given in LZ77 compressed form
so that one can compute its LCS with uncompressed patterns given online.
Our HIA data structure yields the following result.

\begin{corollary}\label{cor:LZ_LCS}
Let $S$ be a string of length $N$ whose
LZ77 parse consists of~$n$ phrases. We can store~$S$ in
$\cO(n \log N + n \polylog n)$ space such that, given a pattern $P$ of length $m$,
we can compute the LCS of $S$ and $P$ in $\cO(m \log n/\log \log n)$ time. For each pattern $P$,
the returned result may be (consistently) incorrect with probability inverse polynomial in $n$.\footnote{Randomization is only used in the construction; all queries for the same pattern give identical results.}
\end{corollary}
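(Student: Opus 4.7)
The plan is to reduce the problem to HIA via the anchoring scheme of Gagie et al.~\cite{DBLP:conf/cccg/GagieGN13}, taking the LZ77 phrase starts as anchors, and then to invoke Theorem~\ref{thm:final}.

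Let $A = \{k_1 < \cdots < k_n\}$ be the set of starting positions of the LZ77 phrases of $S$. A classical property of LZ77 is that every substring of $S$ has an occurrence that crosses some phrase boundary (any factor lying strictly inside a phrase occurs earlier, so tracing back eventually yields an occurrence crossing a boundary). Consequently, the length of an LCS of $S$ and $P$ equals $\max_{j\in[m]}\bigl(\ell^{\leftarrow}(j) + \ell^{\rightarrow}(j)\bigr)$, where $\ell^{\rightarrow}(j)$ (resp.\ $\ell^{\leftarrow}(j)$) is the longest common prefix of $P[j\dd m]$ (resp.\ $P[1\dd j-1]^R$) with $S[k_t\dd N]$ (resp.\ $S[1\dd k_t-1]^R$) for some \emph{common} index~$t$. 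Plugged into the framework of Figure~\ref{fig:anchoring}, this is precisely an HIA instance on the Patricia tries $\mathcal{T}^{\rightarrow}$ over $\{S[k_t\dd N]\}_{t\in[n]}$ and $\mathcal{T}^{\leftarrow}$ over $\{S[1\dd k_t-1]^R\}_{t\in[n]}$, with leaves labelled by $t$ and node weights equal to string-depth.

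I would store $S$ using the standard $\cO(n\log N)$-space random-access structure for LZ77-compressed strings, which supports $\cO(\log N)$-time character access and Karp--Rabin fingerprint extraction of any substring. The two Patricia tries have $\cO(n)$ nodes and their edge labels are kept implicitly as pointers into~$S$. Building them in $\cOtilde(n)$ time is standard given fingerprint access. Applying Theorem~\ref{thm:final} yields an HIA data structure of size $\cO(n\polylog n)$ with $\cO(\log n/\log\log n)$-time queries, so the overall space is $\cO(n\log N + n\polylog n)$. To answer a query with pattern~$P$, for each $j\in[m]$ I would compute the loci $u_j$ of $P[1\dd j-1]^R$ in $\mathcal{T}^{\leftarrow}$ and $v_j$ of $P[j\dd m]$ in $\mathcal{T}^{\rightarrow}$, and then invoke the HIA query on $(u_j,v_j)$; the maximum over $j$ of the returned total weight is the LCS length (with careful bookkeeping for loci falling inside Patricia edges).

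The loci themselves are computed by fingerprint-based navigation of the Patricia tries combined with weighted level-ancestor queries, maintained incrementally as $j$ advances (a ``matching-statistics''-style traversal), which yields all $2m$ loci in $\cO(m\log n/\log\log n)$ time overall; fingerprints of $S$-substrings cost $\cO(\log N)$ and of $P$-substrings cost $\cO(1)$. Randomness enters only through the Karp--Rabin hash fixed once at preprocessing: a union bound over the polynomially many fingerprint comparisons triggered by a single pattern yields inverse-polynomial error, and the answer is a deterministic function of the hash and of~$P$, so repeated queries with the same pattern are consistent. The main obstacle is precisely to keep the per-position locus cost within the $\cO(\log n/\log\log n)$ budget so that the HIA queries dominate; a naive per-position blind search would be too slow, and the amortisation crucially exploits how $u_j$ and $v_j$ evolve when $j$ is incremented. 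Once this matching-statistics traversal is set up, summing $m$ HIA queries at $\cO(\log n/\log\log n)$ each together with the locus updates gives the claimed query time $\cO(m\log n/\log\log n)$.
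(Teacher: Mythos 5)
Your overall route is the one the paper (implicitly) takes: the paper does not prove \cref{cor:LZ_LCS} at all but merely cites Gagie et al.~\cite{DBLP:conf/cccg/GagieGN13} for the LZ77-anchoring reduction (phrase starts as anchors, Patricia tries $\mathcal{T}^{\leftarrow},\mathcal{T}^{\rightarrow}$ with string-depth weights and label $t$ for the $t$-th phrase) and plugs in \cref{thm:final}. Your reconstruction of this reduction, the use of the classical LZ77 ``primary occurrence'' property, and the role of Karp--Rabin fingerprints in the error bound all match the intended argument.

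The one place where your write-up has a real hole is exactly the step you flag yourself: keeping the $2m$ locus computations within the $\cO(m\log n/\log\log n)$ budget. As written, the accounting does not close. You state that fingerprint extraction for $S$-substrings costs $\cO(\log N)$, and $\log N$ can be polynomially larger than $\log n/\log\log n$ (e.g.\ $N = 2^{\Theta(n)}$), so any per-step call to the LZ77 random-access structure during a query would already exceed the claimed bound. To avoid this, the fingerprints of all node labels (and edge-label prefixes needed for blind-search verification) must be computed once at preprocessing and stored at the $\cO(n)$ Patricia-trie nodes, so that at query time only $\cO(1)$-time $P$-fingerprints and $\cO(1)$-time lookups are used; the $\cO(\log N)$ cost should be charged entirely to construction. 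Secondly, the ``matching-statistics-style traversal'' on a Patricia trie of only $n$ suffixes is not the textbook suffix-tree traversal: suffix links from an arbitrary locus of $\mathcal{T}^{\rightarrow}$ (or $\mathcal{T}^{\leftarrow}$) need not land on a node of the trie, so one needs precomputed suffix-link/weighted-ancestor pointers (or a $z$-fast-trie-style search) together with an amortisation argument over the descent, and you would also need to run the two tries in opposite directions of $j$ (collecting all loci in two passes) because removing the leading character of $P[1\dd j-1]^R$ corresponds to decreasing $j$, not increasing it. None of this is impossible---it is the machinery Gagie et al.\ rely on---but it is asserted rather than argued, and the $\cO(\log N)$ remark as stated actively contradicts the claimed bound, so this step needs to be made precise before the corollary is actually proved.
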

\noindent
Other applications of the HIA problem in string algorithms can be found in~\cite{DBLP:conf/cpm/AbedinH0T18}.

\subparagraph{Tree Decompositions.}

One of the obvious divide-and-conquer techniques for efficiently solving algorithmic problems 
on trees is that of decomposing the tree(s) into smaller pieces and treating each of them separately.
The most important attributes of a tree decomposition are usually its depth, i.e.,
the maximum number of pieces that one path can intersect, and the structure of each
individual piece (e.g., pieces being paths may offer an advantage).
We next describe some tree decompositions for a tree $T$ with $n$ nodes. For a node $v$, denote by $s(v)$ the number of nodes in $v$'s subtree.

Arguably, the most well-known tree decomposition is the \emph{heavy-path decomposition}
\cite{HarelT84}.
Abstractly, this decomposition is a partition of the edges into \emph{light} and \emph{heavy},
such that:
\begin{itemize}
\item all connected components after deleting the light edges are paths, called \emph{heavy paths};
\item each root-to-leaf path consists of $\cO(\log n)$ prefixes of heavy paths and $\cO(\log n)$ light edges,
i.e., the depth of the decomposition is $\cO(\log n)$.
\end{itemize}
A heavy-path decomposition can be realized in several ways; two of which are as follows:
\begin{itemize}
\item HP1: Each non-leaf node $u$ of the tree chooses a child $v$ with maximum $s(v)$
and the edge from $u$ to $v$ is designated as heavy. The remaining edges outgoing from $u$ are light.\label{hp:1}
\item HP2: An edge $(u,v)$ is designated as heavy if and only if $\lfloor \log s(u) \rfloor = \lfloor \log s(v) \rfloor$.\footnote{In some works this has been called a centroid decomposition \cite{DBLP:journals/siamcomp/ColeH05}. It should not be confused with the hierarchical decomposition of the tree obtained by recursively deleting a centroid node, that is, a node whose removal splits the tree into three roughly equal components \cite{DBLP:conf/icalp/BrodalFPO01,DBLP:conf/spire/GiustinaPV19}.}\label{hp:2}
\end{itemize}
Intuitively, using a heavy-path decomposition, one may often lift
an algorithm that only works for paths and/or balanced trees
to work for arbitrary trees---usually with some overhead.

All previous works on the HIA problem used heavy-path decompositions, which,
as discussed, are of depth $\Omega(\log n)$. This adversely affects their query times as
one may have to traverse the decomposition along a root-to-leaf path at query time.
Thus, in order to achieve sublogarithmic query time, we considered tree decompositions
of smaller depths. There are a couple of generalizations of heavy-path decompositions that
have the sought depth, i.e., $\cO(\log n/ \log\log n)$.
We next discuss two such decompositions that are also based on partitioning the edges into light and heavy.
The caveat is that, for each of them, the connected components after the removal of the light edges are trees,
which we call \emph{heavy trees}, instead of paths and hence some extra work is required.\footnote{Heavy trees are sometimes called \emph{micro trees}, while the tree obtained from $T$ by contracting each micro tree
is called a \emph{macro tree}. We avoid this notation to not confuse with the so-called \emph{micro-macro decomposition} \cite{DBLP:journals/ipl/AlstrupSS97}, which, for a positive integer $k \leq n$, is a partition of the vertices of $T$ into $\cO(n/k)$ sets, such that each set $S$ is of size $\cO(k)$ it induces a subtree of $T$ and has at most two vertices that have neighbours that are not in $S$.}

The heavy $\alpha$-tree decomposition, introduced by Bille et al.~\cite{DBLP:journals/mst/BilleGVV14},
is of depth $\cO(\log_{\alpha} n)$ and is defined analogously to HP1:
each non-leaf node $u$ chooses its (at most) $\alpha$ heaviest (with respect to subtree-sizes) children;
the edge from $u$ to each of these children is designated as heavy, while all remaining edges outgoing from $u$ are designated as light.
By setting $\alpha=\lfloor \log n \rfloor$ one gets the sought depth.

An alternative is the so-called ART decomposition due to Alstrup et al.~\cite{alstrup}, which, for an input
integer parameter $b$, has depth $\cO(\log_b n)$.
For ease of presentation, we consider~$b$ to be equal to $\lfloor \log n \rfloor$ so that the depth of the decomposition is $\cO(\log n / \log\log n)$.
A partition of the edges yields such an ART decomposition if and only if
each heavy tree contains $\cO(\log n)$ nodes that have more than one child (in the heavy tree).
Alstrup et al.~\cite{alstrup} showed how to compute an ART decomposition by computing a set of leafmost light edges (in the spirit of HP2 with the base of the logarithm changed from 2 to $\lfloor \log n \rfloor$), removing them along with their descendants from the tree, and recursing.
Here, for convenience, we compute an ART decomposition similar to the HP2-realization of a heavy-path decomposition:
an edge $(u,v)$ is heavy if and only if both $s(u)$ and $s(v)$ are in $(n/{\lfloor \log n \rfloor}^{k+1},n/{\lfloor \log n \rfloor}^k]$.
For each heavy tree, we call \emph{branches} its maximal down-the-tree paths in which all nodes except the deepest one
have exactly one child (in the heavy tree); each heavy tree has $\cO(\log n)$ branches.

\subparagraph{Our techniques.} In order to answer an HIA query for nodes $u$ and $v$, we consider $\cO(\log n/ \log \log n)$ pairs of heavy trees that consist of a heavy tree in the root-to-$u$ path
and a heavy tree in the root-to-$v$ path.
For each such pair, we compute an induced pair $(x,y)$ of nodes in these heavy trees
that are ancestors of $u$ and $v$, respectively, and have maximum total weight.
Similarly to previous work, we observe that not every pair of heavy trees needs to be considered.
Instead, it suffices to consider a number of pairs of heavy trees linear to the depth of the tree decompositions
by a procedure analogous to the natural algorithm for checking whether there are two elements of a sorted list
that sum to a target $t$: start with two pointers, one at the beginning of the list and one at the end and move each of them in only one direction (either to the right or left).
For each pair of heavy trees, we construct data structures that can efficiently handle each of the cases of how the locations
of the lowest ancestors of $u$ and $v$ in the heavy trees relate to the locations
of the lowest ancestors (in the heavy trees) of same-label leaves.
Each data structure considers similar cases as previous work, however now we are working with two trees instead of two paths,
and hence need to be more careful.
This way, we reduce an HIA query to $\cO(\log n/ \log\log n)$ predecessor queries.
By answering each of these predecessor queries independently, we obtain a data structure that answers HIA queries in $\cO(\log n)$ time.
Indeed, in our case, each predecessor query requires $\Omega(\log \log n)$ time to be answered independently.

However, crucially, we show how to design the data structures so that all predecessor queries need only two values:
the preorder number of $u$ or the preorder number of $v$.
This is achieved by reordering the trees so that heavy edges come last. Then, larger preorder numbers
correspond to a larger depth of the lowest ancestor on a branch.
This means that the combination of our techniques with fractional cascading
would yield a faster algorithm for answering all the predecessor queries; the first one for each queried value would take $\cO(\log\log n)$ time, while all subsequent ones would take $\cO(1)$ time each.
The final technical hurdle is that fractional cascading requires the so-called underlying catalog graph
to have polylogarithmic degree \cite{quingminjaja}.
The construction of such a graph is straightforward if $T_1$ and $T_2$ are of polylogarithmic degree:
roughly speaking, it suffices to consider the Cartesian product of two trees whose nodes represent branches and heavy trees of each of $T_1$ and $T_2$.
We overcome this difficulty in the general case by reducing the maximum degree of these trees prior to taking their Cartesian product while maintaining all of their desirable properties.

\section{Preliminaries}

We use $[n]$ to denote the set $\{1, 2, \ldots, n\}$.
Throughout the paper, we perform the same operations on $T_1$ and $T_2$ and define
objects in these trees, so we are going to use $T_\star$ to denote any of the
trees. Similarly, we are going to use $v_\star$ to denote a node $v_1 \in T_1$
or a node $v_2$ in~$T_2$ etc. as an abbreviation of writing that some property holds for $v_i$ for both $i\in\{1,2\}$.

\subparagraph*{Lowest Common Ancestor. }
LCA queries can be answered in constant time after an $\cO(n)$-time preprocessing \cite{HarelT84}.

\defDSproblem{\textsc{LowestCommonAncestor} (\textsc{LCA})}{A rooted tree $T$.}{What is the node of largest depth that is an ancestor of both $u$ and $v$?}

\subparagraph*{Predecessor query. }
For a static set $S$, a combination of $x$-fast tries~\cite{WILLARD198381} and deterministic dictionaries~\cite{ruzic} yields an $\cO(n)$-size data structure that can be built in $\cO(n)$ time and answers predecessor queries in $\cO(\log \log U)$ time deterministically (cf., \cite[Proposition 2]{DBLP:conf/cpm/0001G15}); this is optimal~\cite{DBLP:journals/siamcomp/Patrascu11}.

\defDSproblem{\Pred}{A set $S$ of $n$ integers from $[U]$.}{For a given integer $y$, what is the largest $x \in S$ such that $x \le y$?}

\subparagraph*{Range Minimum Query. }
RMQs can be answered in constant time after an $\cO(n)$-time preprocessing \cite{fischer-heun,DBLP:journals/algorithmica/DemaineLW14}.
By setting, for each $i$, $S[i]:=|U|-S[i]$, we get a structure for the symmetric \textsc{RangeMaximumQuery} problem.

\defDSproblem{\textsc{RangeMinimumQuery} (\textsc{RMQ})}{A sequence $S$ of $n$ integers from $[U]$.}{For given positions $i$ and $j$, with $1 \le i \le j \le n$, what is (the position of) the minimum among $S[i], S[i+1], \ldots, S[j]$?}

\subparagraph*{Deterministic Static Dictionary. }
A dictionary is a structure that stores a set of keys (often with associated values)
and allows answering membership queries (or getting the value of a given key).
There are multiple randomized solutions, but there 
is even a deterministic solution with
$\cO(n)$ space, $\cOtilde(n)$-time preprocessing, and constant-time queries~\cite{ruzic}.

\subparagraph*{Fractional cascading.}
Consider a directed graph $C$, called the \emph{catalog graph}, which has a sorted list (also called a catalog)
in each of its nodes.
Let the total size of the lists be $n$.
Now, suppose that we want to answer queries of the following type:
for a connected subgraph $G$ of~$C$ and a query value $v$,
find the predecessor of $v$ in each of the lists stored in the nodes of $G$.

A naive way of solving this problem would be to ignore any preprocessing and run
a separate binary search in the sorted list of each of the nodes of $G$, for a total of $\cO(|G| \log n)$ time.
Fractional cascading is a general optimization technique that allows the speed-up
of multiple binary searches for the same value over multiple related sorted
sequences of objects.

If the degree of each node of the
catalog graph is bounded by a constant, the original solution of Chazelle and Guibas
\cite{chazelle} answers a query in $\cO(\log n + |G|)$ time after a linear-time
preprocessing in the comparison model.
To be precise, it is sufficient for the catalog graph to have
\emph{locally bounded degree} (as per Definition 1 of \cite{chazelle}).
Unfortunately, in our case, this is not useful.
A subsequent work of Shi and JáJá \cite{quingminjaja} achieved the same complexities
for graphs of polylogarithmic maximum degree in the word RAM model of computation
(in the original description, the graph is a tree, however, there is no difficulty in extending this
to the general setting considered by Chazelle and Guibas~\cite{chazelle}).
Crucially, these data structures can also handle the case where the
nodes of $G$ are given one by one in an online manner;
the only requirement is that each node (other than the first) must be a neighbour of some previous one.
Note that the $\cO(\log n)$ term in the complexities comes from performing
a binary search in the first of the considered lists.
Then, the predecessor of the query value $v$ in each of the subsequently considered lists
is obtained by following a constant number of pointers, which can be retrieved in $\cO(1)$ time.
(During the preprocessing phase, the catalogs are augmented in an appropriate manner and said pointers are constructed.)
In the word RAM model of computation, the first query can be solved faster using other data
structures, e.g., in $\Oh(\log\log U)$ time with the structure discussed above for the \Pred problem.

\section{An $\cOtilde(n)$-size Data Structure with Optimal Query Time}

Let $b > 1$ be a parameter to be chosen later.
Consider a rooted, weighted, and labelled tree~$T$.
The weight of a node $u$ is denoted by $\weight(u)$.
For a node $v$, we denote by $s(v)$ the number of nodes in
$v$'s subtree, including $v$.
For an integer $k$,
a node~$v$ is on layer~$k$ if and only if $n/b^{k+1} < s(v) \le n/b^k$.
An edge that connects nodes of the same layer is called \emph{heavy} and the
other edges are \emph{light}.
Each maximal subtree that does not contain any light edges is called
a \emph{heavy tree}.
We stress that a heavy tree might be a singleton.

We decompose each heavy tree into \emph{branches}, that is, maximal down-the-tree paths of nodes, where every node apart from the deepest one has one child (in the heavy tree).
The last node is either a leaf of the heavy tree or has at least two children.
Note that there can be branches consisting of a single node.
We call a node \emph{implicit} if it is an internal (non-leaf) node of the heavy tree with one child; otherwise, we call it \emph{explicit}.
For a heavy tree, we obtain a \emph{compacted} version of it, called \emph{compacted heavy tree}, by eliminating all the implicit nodes through the contraction of either of their incident edges.
See an example in Figure \ref{fig:heavy_tree_expl}.

\begin{figure}[h]
\begin{center}
\scalebox{0.6}{\begin{tikzpicture}

\node[fill,circle,scale=0.5,label={[label distance=0.01cm]30:26}] (v-26) at (0,8) {};
\node[fill,circle,scale=0.5,label={[label distance=0.01cm]180:13}] (v-13) at (-1.5,7) {};
\node[fill,circle,scale=0.5,label={[label distance=0.01cm]0:12}] (v-12) at (1.5,7) {};
\node[fill,circle,scale=0.5,label={[label distance=0.01cm]180:6}] (v-6a) at (-2.5,6) {};
\node[fill,circle,scale=0.5,label={[label distance=0.01cm]180:6}] (v-6b) at (-0.5,6) {};
\node[fill,circle,scale=0.5,label={[label distance=0.01cm]0:10}] (v-10) at (1,6) {};
\node[fill,circle,scale=0.5,label={[label distance=0.01cm]0:1}] (v-1i) at (2.5,6) {};
\node[fill,circle,scale=0.5,label={[label distance=0.01cm]180:5}] (v-5) at (-2.5,5) {};
\node[fill,circle,scale=0.5,label={[label distance=0.01cm]180:4}] (v-4a) at (-2.5,4) {};
\node[fill,circle,scale=0.5,label={[label distance=0.01cm]180:3}] (v-3a) at (-2.5,3) {};
\node[fill,circle,scale=0.5,label={[label distance=0.01cm]180:2}] (v-2a) at (-2.5,2) {};
\node[fill,circle,scale=0.5,label={[label distance=0.01cm]180:1}] (v-1a) at (-2.5,1) {};
\node[fill,circle,scale=0.5,label={[label distance=0.01cm]180:3}] (v-3b) at (-1.5,5) {};
\node[fill,circle,scale=0.5,label={[label distance=0.01cm]270:1}] (v-1b) at (-2,4) {};
\node[fill,circle,scale=0.5,label={[label distance=0.01cm]270:1}] (v-1c) at (-1,4) {};
\node[fill,circle,scale=0.5,label={[label distance=0.01cm]270:1}] (v-1d) at (-0.75,5) {};
\node[fill,circle,scale=0.5,label={[label distance=0.01cm]270:1}] (v-1e) at (0,5) {};
\node[fill,circle,scale=0.5,label={[label distance=0.01cm]0:9}] (v-9) at (1,5) {};
\node[fill,circle,scale=0.5,label={[label distance=0.01cm]0:8}] (v-8) at (1,4) {};
\node[fill,circle,scale=0.5,label={[label distance=0.01cm]0:7}] (v-7) at (1,3) {};
\node[fill,circle,scale=0.5,label={[label distance=0.01cm]180:4}] (v-4b) at (0.25,2) {};
\node[fill,circle,scale=0.5,label={[label distance=0.01cm]0:2}] (v-2c) at (1.75,2) {};
\node[fill,circle,scale=0.5,label={[label distance=0.01cm]0:1}] (v-1h) at (1.75,1) {};
\node[fill,circle,scale=0.5,label={[label distance=0.01cm]180:2}] (v-2b) at (-0.25,1) {};
\node[fill,circle,scale=0.5,label={[label distance=0.01cm]180:1}] (v-1f) at (-0.25,0) {};
\node[fill,circle,scale=0.5,label={[label distance=0.01cm]180:1}] (v-1g) at (0.75,1) {};
\draw[ultra thick] (v-26) -- (v-13);
\draw (v-13) -- (v-6a);
\draw[ultra thick] (v-6a) -- (v-5);
\draw[ultra thick] (v-5) -- (v-4a);
\draw[ultra thick] (v-4a) -- (v-3a);
\draw (v-3a) -- (v-2a);
\draw[ultra thick] (v-2a) -- (v-1a);
\draw[ultra thick] (v-26) -- (v-12);
\draw[ultra thick] (v-12) -- (v-10);
\draw (v-12) -- (v-1i);
\draw[ultra thick] (v-10) -- (v-9);
\draw (v-9) -- (v-8);
\draw[ultra thick] (v-8) -- (v-7);
\draw[ultra thick] (v-7) -- (v-4b);
\draw (v-7) -- (v-2c);
\draw (v-4b) -- (v-2b);
\draw[ultra thick] (v-2b) -- (v-1f);
\draw (v-4b) -- (v-1g);
\draw[ultra thick] (v-2c) -- (v-1h);
\draw (v-13) -- (v-6b);
\draw[ultra thick] (v-6b) -- (v-3b);
\draw (v-6b) -- (v-1d);
\draw (v-6b) -- (v-1e);
\draw (v-3b) -- (v-1b);
\draw (v-3b) -- (v-1c);
\end{tikzpicture}

\hspace{2.5cm}

\raisebox{4cm}{
${\Longrightarrow}$
}

\hspace{2.5cm}

\raisebox{1.5cm}{
\begin{tikzpicture}
\node[fill,circle,scale=0.5,label={[label distance=0.01cm]30:26}] (v-26) at (0,8) {};
\node[fill,circle,scale=0.5,label={[label distance=0.01cm]180:13}] (v-13) at (-1.5,7) {};
\node[draw,circle,scale=0.5,label={[label distance=0.01cm]0:12}] (v-12) at (0.5,7.6667) {};
\node[draw,circle,scale=0.5,label={[label distance=0.01cm]180:6}] (v-6a) at (-2.5,6) {};
\node[draw,circle,scale=0.5,label={[label distance=0.01cm]180:6}] (v-6b) at (-0.5,6) {};
\node[draw,circle,scale=0.5,label={[label distance=0.01cm]0:10}] (v-10) at (1,7.3333) {};
\node[fill,circle,scale=0.5,label={[label distance=0.01cm]0:1}] (v-1i) at (2.5,7) {};
\node[draw,circle,scale=0.5,label={[label distance=0.01cm]180:5}] (v-5) at (-2.5,5.6667) {};
\node[draw,circle,scale=0.5,label={[label distance=0.01cm]180:4}] (v-4a) at (-2.5,5.3333) {};
\node[fill,circle,scale=0.5,label={[label distance=0.01cm]180:3}] (v-3a) at (-2.5,5) {};
\node[draw,circle,scale=0.5,label={[label distance=0.01cm]180:2}] (v-2a) at (-2.5,4) {};
\node[fill,circle,scale=0.5,label={[label distance=0.01cm]180:1}] (v-1a) at (-2.5,3) {};
\node[fill,circle,scale=0.5,label={[label distance=0.01cm]180:3}] (v-3b) at (-1.5,5) {};
\node[fill,circle,scale=0.5,label={[label distance=0.01cm]270:1}] (v-1b) at (-2,4) {};
\node[fill,circle,scale=0.5,label={[label distance=0.01cm]270:1}] (v-1c) at (-1,4) {};
\node[fill,circle,scale=0.5,label={[label distance=0.01cm]270:1}] (v-1d) at (-0.75,5) {};
\node[fill,circle,scale=0.5,label={[label distance=0.01cm]270:1}] (v-1e) at (0,5) {};
\node[fill,circle,scale=0.5,label={[label distance=0.01cm]0:9}] (v-9) at (1.5,7) {};
\node[draw,circle,scale=0.5,label={[label distance=0.01cm]0:8}] (v-8) at (1,6) {};
\node[draw,circle,scale=0.5,label={[label distance=0.01cm]0:7}] (v-7) at (1,5.5) {};
\node[fill,circle,scale=0.5,label={[label distance=0.01cm]0:4}] (v-4b) at (1,5) {};
\node[draw,circle,scale=0.5,label={[label distance=0.01cm]0:2}] (v-2c) at (1.75,4) {};
\node[fill,circle,scale=0.5,label={[label distance=0.01cm]0:1}] (v-1h) at (1.75,3) {};
\node[draw,circle,scale=0.5,label={[label distance=0.01cm]180:2}] (v-2b) at (0.25,4) {};
\node[fill,circle,scale=0.5,label={[label distance=0.01cm]180:1}] (v-1f) at (0.25,3) {};
\node[fill,circle,scale=0.5,label={[label distance=0.01cm]270:1}] (v-1g) at (1,4) {};
\draw (v-26) -- (v-13);
\draw (v-13) -- (v-6a);
\draw (v-6a) -- (v-5);
\draw (v-5) -- (v-4a);
\draw (v-4a) -- (v-3a);
\draw (v-3a) -- (v-2a);
\draw (v-2a) -- (v-1a);
\draw (v-26) -- (v-12);
\draw (v-12) -- (v-10);
\draw (v-12) .. controls +(1,1) and +(0,1) .. (v-1i);
\draw (v-10) -- (v-9);
\draw (v-9) -- (v-8);
\draw (v-8) -- (v-7);
\draw (v-7) -- (v-4b);
\draw (v-7) .. controls +(0.5,-0.5) and +(-1,0.5) .. (v-2c);
\draw (v-4b) -- (v-2b);
\draw (v-2b) -- (v-1f);
\draw (v-4b) -- (v-1g);
\draw (v-2c) -- (v-1h);
\draw (v-13) -- (v-6b);
\draw (v-6b) -- (v-3b);
\draw (v-6b) -- (v-1d);
\draw (v-6b) -- (v-1e);
\draw (v-3b) -- (v-1b);
\draw (v-3b) -- (v-1c);

\draw[dotted] (-0.1,7.5) ellipse (2.35 and 1.2);
\draw[dotted] (-2.6,5.5) ellipse (0.5 and 0.8);
\draw[dotted] (-2.7,3.5) ellipse (0.5 and 0.8);
\draw[dotted] (1.1,5.5) ellipse (0.5 and 0.8);
\draw[dotted] (0.15,3.5) ellipse (0.5 and 0.8);
\draw[dotted] (1.95,3.5) ellipse (0.5 and 0.8);
\draw[dotted] (0.15,3.5) ellipse (0.5 and 0.8);
\draw[rotate around={-45:(-1.15,5.5)},dotted] (-1.15,5.5) ellipse (0.45 and 1.1);
\draw[dotted] (v-1b) ellipse (0.25 and 0.55);
\draw[dotted] (v-1c) ellipse (0.25 and 0.55);
\draw[dotted] (-0.75,4.8) ellipse (0.25 and 0.4);
\draw[dotted] (v-1e) ellipse (0.25 and 0.55);
\draw[dotted] (v-1g) ellipse (0.25 and 0.55);
\draw[dotted] (v-1i) ellipse (0.25 and 0.55);
\end{tikzpicture}
}}
\caption{An example tree with $n = 26$ is shown.
Each node is labelled with its subtree size, while $b=3$.
On the left, the heavy edges are thick, while the light edges are thin.
On the right, the tree is decomposed into heavy trees (marked with
ellipses).
The compaction of heavy trees is illustrated as follows: empty circles denote implicit nodes, while full circles denote explicit nodes.
}
\label{fig:heavy_tree_expl}
\end{center}
\vspace{-0.2cm}
\end{figure}
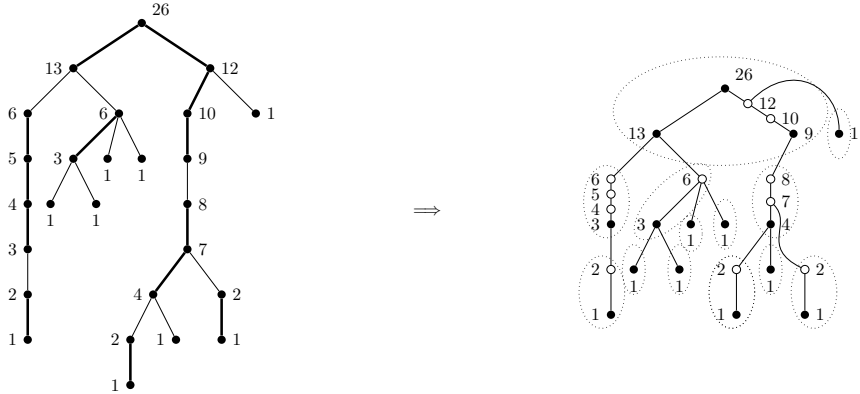

Observe that in a compacted tree, every non-leaf node has at least two
children.
Hence, there are fewer internal nodes than there are leaves of the tree.
As every leaf in a heavy tree has a sufficiently big subtree underneath it, we obtain a bound on the total number of nodes inside a single compacted heavy tree;
this shows that we obtain an ART decomposition \cite{alstrup}.

\begin{lemma}
There are at most $\cO(\log n / \log b)$ layers in a tree $T$ on $n$ leaves and each heavy tree has $\cO(b)$ branches.
\label{lemma:layers_count}
\end{lemma}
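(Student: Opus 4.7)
The plan is to handle the two claims --- layer count and branch count --- separately, with the layer count being essentially immediate from the definition and the branch count relying on one small structural observation.

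For the layer bound, I would observe that every node $v$ has $s(v)\ge 1$, so the condition $s(v)\le n/b^k$ required for $v$ to lie on layer $k$ forces $b^k\le n$, i.e., $k\le \log_b n$. Combined with $k\ge 0$, this leaves $\lfloor\log_b n\rfloor+1 = \Oh(\log n/\log b)$ admissible values of $k$, which yields the first assertion.

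For the branch bound, I would first establish the sharper statement that any heavy tree $H$ has fewer than $b$ leaves (viewed as a rooted tree in its own right). Let $H$ sit on layer $k$ and let $r$ be its root, so $s(r)\le n/b^k$. The key step is to argue that the $T$-subtrees hanging from distinct leaves of $H$ are pairwise disjoint. Suppose two heavy-tree leaves $\ell_1,\ell_2$ were $T$-comparable, with $\ell_1$ an ancestor of $\ell_2$. Monotonicity of $s$ along a root-to-leaf path of $T$ forces every node on the $T$-path from $\ell_1$ to $\ell_2$ to have $s$-value sandwiched between $s(\ell_2)$ and $s(\ell_1)$, hence to lie on layer $k$; therefore the whole path consists of heavy edges and lies inside $H$. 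But then $\ell_1$ has a child inside $H$, contradicting that it is a leaf of $H$. Given this disjointness, summing the leaf-wise inequality $s(\ell)>n/b^{k+1}$ inside the subtree of $r$ yields $L\cdot (n/b^{k+1})<s(r)\le n/b^k$, whence the leaf count $L$ satisfies $L<b$.

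To pass from leaves to branches, I would use the compacted heavy tree. Each branch is uniquely determined by its deepest node, which is explicit by definition, so the number of branches in $H$ equals the number of nodes in the compacted heavy tree. The compacted tree has the same leaf set as $H$, and every non-leaf node there has at least two children, so its internal-node count is at most $L-1$. Thus $H$ has at most $2L-1=\Oh(b)$ branches. The only genuinely non-trivial step is the disjointness argument that yields $L<b$; the rest is definition-chasing and a standard double-counting of internal nodes in a tree whose every internal node branches.
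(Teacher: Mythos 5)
Your proof is correct and follows essentially the same route as the paper: bound the number of leaves of a heavy tree $H$ by observing that the $T$-subtrees hanging from its leaves are disjoint and each has size greater than $n/b^{k+1}$ while summing to at most $s(r)\le n/b^k$, then pass to the compacted heavy tree to bound the branch count, and read off the layer bound from $s(v)\ge 1$. The only difference is that you spell out the disjointness of the leaf-subtrees (which the paper merely asserts, and which in fact follows more directly from $H$ being a connected subtree of $T$, without needing the monotonicity of $s$) and make explicit the bijection between branches and explicit nodes; both are harmless elaborations of the same argument.
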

\begin{proof}
  Consider a heavy tree $H$ of $T$ and its compacted form $H_C$.
  Let $r$ be the root of $H$.
  For every leaf $\ell$ of $H$, we have that $s(\ell)> s(r)/b$ as $\ell$ and $r$ are nodes of the same heavy tree, and hence they are in the same layer of $T$.
  As the subtrees of $T$ rooted at the leaves of~$H$ are disjoint and their total size is at most $s(r)$, $H$ contains at most $b$ leaves.
  Further, as there are no internal nodes with one child in $H_C$, there are
  at most $b-1$ non-leaf nodes in $H_C$, and hence $H_C$ has $\cO(b)$ nodes.
  All branches in $H$ are disjoint, so there are $\cO(b)$ of them.

  Consider the $k$-th layer of the tree $T$ and a node $u$ in this layer.
  As the subtree of $u$ has at least one node, we have $1\leq s(u) \leq n/b^k$, so $k\leq \log_b n = \log n / \log b$.
\end{proof}

For an HIA query $(v_1, v_2)$, the paths from $v_1$ and $v_2$ to the roots of their
respective trees are called \emph{query paths}.
The result of an HIA query is a pair of nodes $(u_1, u_2)$, that are on the query paths
from $v_1$ and $v_2$, respectively.
To answer a query $(v_1, v_2)$, we identify the sequences of heavy trees $B_1$ and $B_2$ that contain nodes on the query paths from $v_1$ and $v_2$, respectively, and perform \emph{restricted HIA queries} for some pairs of those heavy trees.
More precisely, in each step of the algorithm, having chosen two heavy trees
$B_1[i]$ and $B_2[j]$, we try to find the pair of induced ancestors $(u_1, u_2) \in B_1[i] \times B_2[j]$
of $(v_1, v_2)$
with the maximum combined weight or determine that there is no such pair.
A pseudocode for this procedure is given as \cref{alg:hia}.

\begin{algorithm}
    \DontPrintSemicolon
    \SetKwProg{Fn}{function}{}{}
    \SetKwFunction{hia}{hia}
    \SetKwFunction{length}{length}
    \SetKwFunction{weight}{weight}
    \SetKwFunction{restrictedhia}{restricted-hia}
    \SetKwData{null}{null}
    \SetKwData{false}{false}

    \Fn{\hia{$T_1$, $T_2$, $v_1$, $v_2$}}{
      $(r_1, r_2) \gets \null$ \;
      $(B_1, B_2) \gets \textrm{sequences of heavy trees on paths from $v_1$ and $v_2$ down-the-tree}$ \;
      $i \gets 0$ \;
      $j \gets |B_2| - 1$ \;
      \Repeat{\false}{
        $x_1 \gets \textrm{lowest ancestor of $v_1$ in $B_1[i]$}$ \;
        $x_2 \gets \textrm{lowest ancestor of $v_2$ in $B_2[j]$}$ \;
        $(u_1, u_2) \gets \restrictedhia{$v_1$, $v_2$, $B_1[i]$, $B_2[j]$, $x_1$, $x_2$}$\ \;
        \If{$(r_1, r_2) = \null \mathbf{~or~} \weight{$u_1$} + \weight{$u_2$} > \weight{$r_1$} + \weight{$r_2$}$}{
          $(r_1, r_2) \gets (u_1, u_2)$ \;
        }
        \If{$(i = |B_1| - 1) \mathbf{~and~} (j = 0)$}{\Return{$(r_1, r_2)$}}
        \;

        \If{$i+1 < |B_1| \mathbf{~and~} \textrm{roots of $B_1[i+1]$ and $B_2[j]$ are induced}$}{
          $i \gets i + 1$ \;
        }
        \Else{
          $j \gets j - 1$ \;
        }
      }
    }
    \caption{Algorithm for answering HIA queries.}
    \label{alg:hia}
\end{algorithm}

To make the description more modular, we provide $x_1,x_2$ to the restricted HIA query, where $x_i$ is the lowest ancestor of $v_i$ in the considered heavy tree for $i\in\{1,2\}$.
Note that $x_1$ is either the parent of the root of $B_1[i+1]$ if $i+1<|B_1|$ or $x_1$ equals to $v_1$ otherwise, and similarly for~$x_2$.
We note that in Algorithm \ref{alg:hia}, we ask restricted HIA queries about the same
pair $(v_1, v_2)$ for various pairs $(B_1[i], B_2[j])$ of heavy trees and it may be that
$v_1 \not \in B_1[i]$ and/or $v_2 \not \in B_2[j]$, whereas we also provide nodes $x_\star$, in which $v_{\star}$ connects to the respective heavy subtree in $B_{\star}$

\begin{lemma}
  Algorithm \ref{alg:hia} performs $O(\log n / \log b)$ restricted HIA queries
  to find the heaviest induced ancestors of nodes $u$ and $v$.
  \label{lemma:tree_height}
\end{lemma}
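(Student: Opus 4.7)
The plan is to bound the number of iterations of the main loop of Algorithm~\ref{alg:hia}, each of which performs exactly one restricted HIA query, by $|B_1| + |B_2| - 1$, and then apply Lemma~\ref{lemma:layers_count} to bound each $|B_\star|$ by $\cO(\log n / \log b)$.

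First I would verify that on every non-terminating iteration of the loop exactly one of the indices is updated --- either $i$ is incremented by one or $j$ is decremented by one --- and that the indices stay within range. The guard $i+1 < |B_1|$ on the increment keeps $i \le |B_1|-1$. To rule out driving $j$ below $0$, I would argue that when $j = 0$ the root of $B_2[0]$ is the root of $T_2$, whose leaf descendants include every label, so for any $i+1 < |B_1|$ the induced-roots condition is automatically satisfied and the \textbf{if}-branch is taken; the remaining case $(i,j)=(|B_1|-1,0)$ triggers the \textbf{return}. Thus the loop traces a monotone staircase from $(0,|B_2|-1)$ to $(|B_1|-1,0)$, performing at most $|B_1| + |B_2| - 1$ restricted HIA queries.

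It remains to bound $|B_\star|$. I would argue that consecutive heavy trees along the query path must lie in distinct layers: by the maximality of heavy trees, the edge joining them on the path cannot be heavy, so it is light, and by definition a light edge connects nodes in different layers. Since subtree sizes are non-increasing down a root-to-leaf path, the layer indices along $B_\star$ are in fact strictly increasing, so $|B_\star|$ is at most the total number of layers, which is $\cO(\log n / \log b)$ by Lemma~\ref{lemma:layers_count}. Plugging this bound in yields $|B_1| + |B_2| - 1 = \cO(\log n / \log b)$ restricted HIA queries, as claimed. The only real subtlety I expect is the boundary case $j=0$; the rest is a direct consequence of the two-pointer structure of the loop and the layering definition.
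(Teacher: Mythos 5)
Your argument for the query count is sound, and in fact slightly more careful than the paper's: you explicitly rule out $j$ going negative by observing that when $j=0$ the root of $B_2[0]$ is the root of $T_2$, which is induced with any node, so the $i$-increment branch is always taken until the termination condition fires. The paper does not spell this out. Your reasoning that consecutive heavy trees on a query path lie in strictly different layers (via the light edge joining them) is also correct, and together with the two-pointer structure gives the $|B_1|+|B_2|-1 = \cO(\log n/\log b)$ bound.

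However, there is a genuine gap: the lemma claims not merely that the algorithm issues few restricted HIA queries, but that these queries suffice \emph{to find the heaviest induced ancestors}. Because the algorithm visits only a staircase of $\cO(\log n/\log b)$ cells out of the $|B_1|\cdot|B_2|$ possible pairs of heavy trees, correctness is not automatic --- one must argue that the optimal pair is never skipped. The paper's proof devotes most of its length to exactly this: it introduces the notion of a pair $(B_1[i'],B_2[j'])$ being \emph{dominated} (there exist $i>i'$, $j>j'$ with the roots of $B_1[i]$, $B_2[j]$ induced), observes by monotonicity of the induced relation that dominated pairs cannot contain the optimum, and then establishes the invariant that the two-pointer scan queries exactly those pairs that are non-dominated and have a non-\textsf{null} answer. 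Your proof omits this entirely; as written it establishes the upper bound on the number of queries but not that the returned pair is optimal, which is part of the claim. You would need to add the domination argument (or an equivalent exchange argument showing every skipped pair is dominated by a visited one) to make the proof complete.
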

\begin{proof}
  The paths from $v_\star$ to the roots pass heavy trees with monotonically decreasing indices of layers, so the sequences $B_1$ and $B_2$ contain at most $\cO(\log n / \log b)$ elements.
  After every restricted HIA query, we either increase $i$ or decrease $j$ by $1$, so the total number of restricted HIA queries that we perform is at most $|B_1|+|B_2|=\cO(\log n / \log b)$.

  The correctness follows from the monotonicity of being induced.
  For an induced pair of nodes, any pair of their (weak) ancestors is also induced.
  Conversely, if a pair is not induced, any pair of their (weak) descendants
  is also not induced.

  This implies that if the roots of $B_1[i]$ and $B_2[j]$ are induced then there is no need to query for any pair
  $(B_1[i'], B_2[j'])$ with $(i' < i) \wedge (j' < j)$,
  as such a query would return a pair of (strict) ancestors of the pair
  returned by the restricted HIA query for $(B_1[i], B_2[j])$.
  We call a pair of trees $(B_1[i'], B_2[j'])$ \emph{dominated}, if there exist $i>i'$ and $j>j'$
  such that the roots of $B_1[i]$ and $B_2[j]$ are induced.
  We show that the algorithm performs restricted HIA queries at Line 9 exactly for those pairs of heavy trees
  that (i) are not dominated and (ii) for which
  the result of the restricted HIA query is not \textsf{null}.

  The algorithm maintains the invariant that each of the restricted HIA queries is called for the pairs of trees
  for which their roots are induced
  and that the pair $(B_1[i], B_2[j])$ is not dominated.
  This is true for the first iteration, where the pair $(B_1[0], B_2[|B_2|-1])$
  is considered, because $B_2[|B_2|-1]$ has a leaf and the root of $B_1[0]$ is the root of $T_1$.
  Now we show that the invariant is maintained later.
  We start with the assumption that the result of $\restrictedhia(v_1, v_2, B_1[i], B_2[j],x_1,x_2)$
  is not $\textsf{null}$ and this pair is non-dominated.
  Now, we have to distinguish between two cases.
  \enumcases{
  \item We next consider pair $(B_1[i+1], B_2[j])$.
  It means that the check in Line 15 confirmed that the roots of the the trees are induced, so the the result of calling $\restrictedhia(v_1, v_2, B_1[i+1], B_2[j],x_1,x_2)$  is not $\textsf{null}$. Further, this pair of heavy paths is not dominated since $(B_1[i], B_2[j])$ is not dominated.
  \item We next consider pair $(B_1[i], B_2[j-1])$. This can only happen if $i=|B_1|-1$ or the roots of $B_1[i+1]$ and $B_2[j]$ are not induced; in either of these cases, $(B_1[i], B_2[j-1])$ is not dominated. Further, as the answer to the HIA query for pair $(B_1[i],B_2[j])$ is not $\textsf{null}$, the answer for $(B_1[i], B_2[j-1])$ cannot be $\textsf{null}$ either.
  }
  Thus, the invariant is maintained in both cases.
  
  Clearly, the heaviest induced pair of ancestors of $v_1,v_2$ belongs to a pair of heavy trees that satisfy conditions (i) and (ii).
  We claim that we process all such pairs.
  Observe that for a fixed $j$ there are two indices $0\leq i_1<i_2\leq |B_1|$ such that the pair $B_1[i],B_2[j]$ is dominated for $0\leq i <i_1$, non-dominated for $i_1\leq i <i_2$, and corresponds to a $\textsf{null}$ answer for $i_2\leq i < |B_1|$.
  By the invariant, just after any decrease of $j$ in Line 18 it holds that $i\geq i_1$, as $B_1[i],B_2[j]$ is non-dominated.
  Actually $i=i_1$, because the pair $B_1[i-1],B_2[j]$ is dominated as in the previous step we considered the pair $B_1[i],B_2[j+1]$ for which the answer was not $\textsf{null}$.
  As in the next steps we process all $i$ up to (but excluding) $i_2$, the claim follows.
\end{proof}

\subsection{Restricted HIA Queries}
In this subsection, we present a data structure that efficiently answers restricted HIA queries.
\begin{theorem}\label{thm:main_data_structure}
 For every two trees $T_1,T_2$ on $n$ leaves and an integer parameter $b \in [n]$, there exists an $\cO(nb^2 \log^2 n / \log^2 b)$-size data structure that can be computed in $\Ohtilde(nb^2/\log^2 b)$ time and answers
 (i) queries about whether the roots of two given heavy trees are induced in constant time,
 (ii) any restricted HIA query $\restrictedhia(v_1, v_2, H_1, H_2,x_1,x_2)$ in constant time plus the time required to answer a predecessor query about $\pre(v_1)$ and one about $\pre(v_2)$; these predecessor queries are performed on two out of $\cO(b^2)$ (preprocessed) lists stored for the pair of heavy trees $(H_1,H_2)$.
\end{theorem}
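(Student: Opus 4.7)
The plan is to fix a pair of heavy trees $(H_1, H_2)$ and build data structures for restricted HIA queries on it. By \cref{lemma:layers_count}, each heavy tree has $\cO(b)$ branches, giving $\cO(b^2)$ pairs of branches in total. As a key preliminary, I would reorder the children of every node in $T_1$ and $T_2$ so that the child leading into the heavy tree (if any) comes last in the child order, and recompute preorder numbers accordingly. After this reordering, along any branch $\beta$, the explicit nodes in increasing-depth order have strictly increasing preorders; consequently, for any node $v$ whose ancestor path intersects $\beta$, its lowest ancestor on $\beta$ is the explicit node with the largest preorder not exceeding $\pre(v)$, which is recovered by a single predecessor query.

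For each pair of branches $(\beta_1, \beta_2)$ I would gather the \emph{events}: labels $\ell$ whose $T_i$-leaf has an ancestor $a_i^\ell \in \beta_i$. Given the query's lowest ancestors $y_i$ of $v_i$ on $\beta_i$, the optimal induced pair for this branch pair is determined by whether each $a_i^\ell$ lies weakly above or strictly below $y_i$, yielding four sub-cases in which $(u_1, u_2)$ is, respectively, $(a_1^\ell, a_2^\ell)$, $(a_1^\ell, y_2)$, $(y_1, a_2^\ell)$, or $(y_1, y_2)$. For each sub-case I would maintain a sorted list; together these give $\cO(b^2)$ lists per pair of heavy trees, as promised. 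Part~(i)'s induced-roots query I would resolve with a deterministic dictionary keyed on pairs of heavy-tree roots. Since each label $\ell$ contributes to $\cO((\log n/\log b)^2)$ pairs of heavy trees and to $\cO(1)$ branch pair within each, the total list size is $\cO(nb^2\log^2 n/\log^2 b)$ and preprocessing runs in $\cOtilde(nb^2/\log^2 b)$ time.

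The chief obstacle is collapsing the $\cO(b^2)$ per-branch-pair sub-queries into just two global predecessor queries, one on $\pre(v_1)$ and one on $\pre(v_2)$. My strategy is to merge, separately within each sub-case, the events over all branch pairs into a single master sorted list per side (keyed on $T_1$-side preorder, respectively $T_2$-side). The reordering is critical here: a single predecessor on $\pre(v_1)$ simultaneously identifies the correct branch of $H_1$ and the optimal ancestor on it, because moving from a node to its heavy child increases the preorder past all of its light children's subtrees, so the natural ordering aligns across branches. In the ``both-above'' sub-case the contribution $w(a_1^\ell) + w(a_2^\ell)$ is independent of $y_i$ and reduces to a precomputed prefix-maximum; in the mixed sub-cases the unknown $w(y_i)$ is fetched in $\cO(1)$ from the predecessor on the corresponding side's branch; and the ``both-below'' sub-case reduces to the induced-roots existence check of part~(i). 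The final answer is the maximum of the $\cO(1)$ candidates, completing the constant-time overhead on top of the two predecessor queries.
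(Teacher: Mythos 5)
Your high-level setup (reordering children so the heavy child comes last, the four-case analysis on where the inducing label attaches relative to $x_1,x_2$, a deterministic dictionary for part~(i)) matches the paper, but there are two concrete gaps that undermine the construction. First, your preprocessing is too restrictive: you only record a label $\ell$ at a branch pair $(\beta_1,\beta_2)$ when $\ell_i$'s leaf has an ancestor on $\beta_i$. The paper instead inserts, for every label $\ell$ whose query path meets the heavy tree $B_i$, a point $\bigl(\pre(w_1),\pre(w_2)\bigr)$ with $w_i=\LCA(\ell_i,\text{lowest node of }e_i)$ into the structure of \emph{every} branch pair $(e_1,e_2)\in B_1\times B_2$, and crucially (Property~\ref{prop:adding_points_on_path}) $w_i$ may lie \emph{above} $e_i$, on a different branch of the same heavy tree. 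These are exactly the labels realizing the case where the optimal $r_i$ is strictly above the branch of $x_i$; your structure drops them. This is also why your accounting does not add up: $\cO(1)$ branch pairs per heavy-tree pair per label gives only $\cO(n\log^2 n/\log^2 b)$ points, whereas the stated $\cO(nb^2\log^2 n/\log^2 b)$ comes from each label contributing a point to \emph{all} $\cO(b^2)$ branch pairs of each of its $\cO(\log^2 n/\log^2 b)$ relevant heavy-tree pairs.

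Second, your ``chief obstacle'' is misidentified: a restricted HIA query determines a single branch pair $(e_1,e_2)$ (the branches containing $x_1,x_2$, computable in $\cO(1)$), and the paper queries only the two lists $D_x[e_1,e_2]$ and $D_y[e_1,e_2]$ for that pair; the ``$\cO(b^2)$'' in the theorem is the number of lists \emph{stored}, not queried. Merging events across branch pairs into one master list per side cannot work: a predecessor of $\pre(v_1)$ in such a merged list can land on a node that is not an ancestor of $v_1$ at all, and merging also destroys the staircase structure of Property~\ref{prop:monotonicity} (that within a single $D[e_1,e_2]$, sorting by $x$-coordinate also sorts decreasingly by $y$-coordinate after dominated points are pruned), which is precisely what lets Case~1 be answered by an interval RMQ rather than a 2D range maximum. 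Relatedly, a ``precomputed prefix-maximum'' is not sufficient for the both-above case: the feasible region there is the intersection of a prefix in $x$ and a prefix in $y$, which after postprocessing becomes an arbitrary interval (not a prefix) of the $x$-sorted list, hence the RMQ. The cascading of the two predecessor queries across the $\cO(\log n/\log b)$ restricted queries of Algorithm~\ref{alg:hia} is a separate, later step (Section~\ref{subsec:fractional_cascading}); it is not needed to prove this theorem.
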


We divide the proof into three parts: we first describe the preprocessing phase, then discuss the properties of the created data structure, and, finally, present the query procedure.
\subparagraph{Preprocessing.}
First, we compute the partition of the edges of each of $T_1,T_2$ into heavy and light, and the implied heavy trees.
For each node, we store its assignment to the heavy tree and to the branch to which it belongs.
Further, for each $T_\star$, we build a linear-size data structure for answering LCA queries in $\cO(1)$ time \cite{HarelT84}.
For each node in $T_\star$, we fix the order of its children such that the children that are in the same heavy tree are last.
(The order of children that are connected to the parent with the same type of edge, heavy or light, is arbitrary.)
Next, we compute preorder traversals of $T_\star$, for each node $u$, we denote by $\pre(u)$ the preorder number of $u$ and by $T_\star[p]$ the node of $T_\star$ whose preorder number is $p$; we have $T_\star[\pre(u)]=u$.
Additionally, for each label, we identify the leaves of $T_1$ and $T_2$ with that label (recall the labels in a single tree are pairwise distinct).

Next, for each pair $(\ell_1,\ell_2)$ of leaves with the same label, we iterate over all pairs $(B_1^{\ell_1}[i_1],B_2^{\ell_2}[i_2])$ of heavy trees on their query paths and insert a point to the data structure for each pair of branches in $B_1^{\ell_1}[i_1]\times B_2^{\ell_2}[i_2]$. This procedure is formalized as Algorithm~\ref{alg:preprocessing}.
\begin{algorithm}
    \DontPrintSemicolon
    \SetKwProg{Proc}{procedure}{}{}
    \SetKwFunction{addlabel}{add\_label}

    \Proc{\addlabel{$T_1$, $T_2$, $\ell_1$, $\ell_2$}}{
      $(B_1^{\ell_1}, B_2^{\ell_2}) \gets \textrm{sequence of heavy trees on query paths from $\ell_1$ and $\ell_2$}$ \;
      \For{$i_1 \gets 0, 1, \dots, |B_1^{\ell_1}| - 1$}{
        \For{$i_2 \gets 0, 1, \dots, |B_2^{\ell_2}| - 1$}{
          \For{$e_1 \gets \textrm{branch of $B_1^{\ell_1}[i_1]$}$}{
            \For{$e_2 \gets \textrm{branch of $B_2^{\ell_2}[i_2]$}$}{
              $w_1 \gets \LCA(\ell_1, \textrm{lowest node on $e_1$})$\;
              $w_2 \gets \LCA(\ell_2, \textrm{lowest node on $e_2$})$\;
              \textrm{insert point $(\pre(w_1), \pre(w_2))$ to $D^{B_1^{\ell_1}[i_1],B_2^{\ell_2}[i_2]}[e_1, e_2]$} \;
            }
          }
        }
      }
    }
    \caption{Preprocessing for a pair of leaves $(\ell_1, \ell_2)$ with the same label.}
    \label{alg:preprocessing}
\end{algorithm}

We call pairs $(B_1,B_2)$ of heavy trees that are processed by \cref{alg:preprocessing} \emph{relevant}.
We first run this algorithm once just to record all relevant pairs of heavy trees, without inserting any points to any structures.
We then sort the relevant pairs, remove duplicates, and construct a deterministic dictionary over them~\cite{ruzic}.
This allows us to check in constant time if the roots of two trees are induced because this is equivalent to checking if the pair of trees is relevant.
For each relevant pair of heavy trees, we initialize an array $D^{B_1,B_2}$ indexed by pairs of branches $(e_1, e_2)$, where $e_1$ is a branch in $B_1$ and $e_2$ is a branch in $B_2$.
In each entry of the array, we create (store a pointer to) a data structure for the corresponding pair of branches.
We then re-run \cref{alg:preprocessing}, inserting the points to the structures as needed, with the help of the deterministic dictionary built for relevant pairs of heavy trees.
As each branch~$e_\star$ belongs to a unique heavy tree, we often drop the superscript and write $D[e_1,e_2]$ instead of $D^{B_1,B_2}[e_1,e_2]$.
We call a pair $(e_1,e_2)$ of branches \emph{relevant} if and only if pair of their assigned heavy trees is relevant.
By Lemma~\ref{lemma:layers_count}, every query path is decomposed into $\cO(\log n / \log b)$ parts on different layers and each heavy tree has $\cO(b)$ branches.
Hence, for every pair $(\ell_1,\ell_2)$ of leaves with the same label, we insert a point to $\cO(b^2 \log^2 n / \log^2 b)$ structures.

Finally, for each relevant pair of branches $(e_1,e_2)$, we perform the following postprocessing of structure $D[e_1,e_2]$:

\begin{itemize}
 \item Remove all points $(x,y)$ for which there exists another point $(x',y')$ such that $x\leq x',y\leq y'$ and $(x,y)\ne(x',y')$.
 This can be done in $\Ohtilde(|D[e_1,e_2]|)$ time by sorting the points and processing them in the left-to-right order.
 \item Let $D_x[e_1,e_2]$ and $D_y[e_1,e_2]$ be the sets of $x$- and $y$-coordinates of the remaining points, respectively.
 We build a data structure for the \Pred problem for each of $D_x[e_1,e_2]$ and $D_y[e_1,e_2]$ separately.
 \item We build a data structure for the \textsc{RangeMaximumQuery} problem for the points remaining in $D[e_1,e_2]$ sorted by $x$-coordinate, where the weight of a point $(x,y)$ is $\weight(T_1[x])+\weight(T_2[y])$.
\end{itemize}
\noindent
We call the above stage the postprocessing of $D[e_1,e_2]$.

To summarize the whole preprocessing stage for trees $T_\star$, for each of the $n$ labels we add $\cO(b^2 \log^2 n / \log^2 b)$ points to structures $D[\cdot,\cdot]$,
for a total number of $\cO(nb^2 \log^2 n / \log^2 b)$ points.
The postprocessing of all the structures $D[\cdot,\cdot]$ takes nearly linear time in their size and hence the total running time is $\cOtilde(nb^2/\log^2 b)$.
The structures for the \Pred and \textsc{RangeMaximumQuery} problems have size linear in the number of elements they are built over and hence the total space is $\cO(nb^2 \log^2 n / \log^2 b)$.

\subparagraph{Properties of structures $\mathbf{D[e_1,e_2]}$.}
In this paragraph, we show some properties of the structures $D[e_1,e_2]$ that are useful for answering restricted HIA queries efficiently.

\begin{property}\label{prop:adding_points_on_path}
 For every pair $(w_1,w_2)$ added to $D^{B_1^{\ell_1}[i_1],B_2^{\ell_2}[i_2]}[e_1, e_2]$, $w_\star$ is either on $e_\star$ or on the path from the highest node of $e_\star$ to the root of $B_\star^{\ell_\star}[i_\star]$.
\end{property}
\begin{proof}
Recall that $w_\star$ is the lowest common ancestor of $\ell_\star$ and the lowest node $q$ on $e_\star$.
Observe that as $B_\star^{\ell_\star}[i_\star]$ is on the query path from $\ell_\star$, the root $r$ of $B_\star^{\ell_\star}[i_\star]$ is an ancestor of both $\ell_\star$ and $q$.
Hence, $w_\star$ lies on the $r$-to-$q$ path, which directly yields the statement.
\end{proof}
\noindent
Note that after the first step of postprocessing, $D[e_1,e_2]$ satisfies the following property:

\begin{property}\label{prop:monotonicity}
  After the postprocessing, for every pair $(e_1,e_2)$ of branches, after sorting the points of $D[e_1,e_2]$ increasingly by $x$-coordinate, the sequence of points is also sorted decreasingly by $y$-coordinate. 
\end{property}
\noindent
Informally, we can now consider a one-dimensional problem, with points forming a sequence that can be efficiently navigated both in $x$- and $y$-coordinates via predecessor queries.

We next show how the computed data structures $D[e_1,e_2]$ enable us to answer restricted HIA queries efficiently.

\subparagraph{Answering a restricted HIA query.}
\label{para:answeringrhia}
We are now ready to present how to answer a restricted HIA query for a pair $(v_1,v_2)$ of nodes and heavy trees $B_1$ and $B_2$ on the query paths from $v_1$ and~$v_2$.
Let $(r_1,r_2)=\restrictedhia(v_1,v_2,B_1,B_2,x_1,x_2)$ be a pair of ancestors of $v_1$ and $v_2$ within the trees $B_1$ and $B_2$ that are induced and have the maximum total weight.
Recall that $x_\star$ is the lowest weak ancestor of $v_\star$ that is in $B_\star$ and let $e_\star$ be the branch containing~$x_\star$.
Further, let $\ell$ be the label inducing $(r_1,r_2)$ and let leaves $\ell_\star$ share this label.

First, we show that we can find an induced pair of ancestors of $v_1$ and $v_2$ with the maximum combined weight using the structure $D[e_1,e_2]$ before postprocessing.
Then, we show that after the postprocessing stage, we can still retrieve the correct answer but more efficiently, by performing predecessor queries for $\pre(x_1)$ and $\pre(x_2)$.
Finally, we show that we can call predecessor queries for $\pre(v_1)$ and $\pre(v_2)$ instead of $\pre(x_1)$ and $\pre(x_2)$.
The last step is not important for the correctness or efficiency of a single restricted HIA query but improves the complexity of Algorithm~\ref{alg:hia}.
Indeed, as all predecessor queries are for one of $\pre(v_1)$ or $\pre(v_2)$, we can use fractional cascading.
We explain this final component in detail in Section~\ref{subsec:fractional_cascading}.

Recall that in Algorithm~\ref{alg:preprocessing}, we insert point $(\pre(w_1),\pre(w_2))$ to $D[e_1,e_2]$, where $w_\star=\LCA(\ell_\star, \textrm{lowest node on }e_\star)$.
In the proof of Property~\ref{prop:adding_points_on_path}, we mention that $w_\star$ always belongs to $B_\star$ as the root of $B_\star$ is an ancestor of both $\ell_\star$ and the lowest node on $e_\star$.
There are two possible relative locations of $w_\star$ and $x_\star$ within a heavy tree:
\begin{itemize}
 \item $\ell$ is \emph{below} $x_\star$ when $w_\star$ is a (not necessarily proper) descendant of $x_\star$;
 \item $\ell$ is \emph{attached above} $x_\star$ when $w_\star$ is a proper ancestor of $x_\star$.
\end{itemize}
There are four cases for the relative locations of $\ell$ with respect to $x_1$ and $x_2$:

\enumcases{
  \item $\ell$ is attached above $x_1$ and $x_2$,
  \item $\ell$ is attached above $x_1$ and $\ell$ is below $x_2$,
  \item $\ell$ is attached above $x_2$ and $\ell$ is below $x_1$,
  \item $\ell$ is below $x_1$ and $x_2$.
}

We next treat each of these cases.
For each of them, we retrieve the pair of induced ancestors of $x_1$ and $x_2$ with the largest total weight among all pairs of ancestors induced by a label $\ell$ appropriately located with respect to $x_1$ and $x_2$.
Each of these variants gives us a candidate pair for the restricted heaviest induced ancestors of $x_1$ and $x_2$.
In the end, we return the candidate with the largest total weight.
Similar case analysis was performed in previous solutions for the HIA problem, e.g., in~\cite{DBLP:conf/cccg/GagieGN13}.

\begin{lemma}\label{le:restricted-hia-before-postprocessing}
 The answer to \restrictedhia$(v_1,v_2,B_1,B_2,x_1,x_2)$ can be retrieved from the information stored in $D[e_1,e_2]$ before the postprocessing.
\end{lemma}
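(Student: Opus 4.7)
The plan is to handle each of the four cases separately, exhibit the corresponding candidate pair $(r_1, r_2)$ as a function of some point $(w_1, w_2)$ stored in $D[e_1, e_2]$ by \cref{alg:preprocessing}, and return the heaviest of the four candidates. The key structural fact is that $r_\star$ must be a weak ancestor of $x_\star$ inside $B_\star$: since $r_\star \in B_\star$ is an ancestor of $v_\star$ and $x_\star$ is, by definition, the lowest such ancestor, $r_\star$ lies on the root-to-$x_\star$ path in $B_\star$.

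For a fixed candidate inducing label $\ell$ with leaves $\ell_1, \ell_2$, writing $q_\star$ for the lowest node of $e_\star$ (a weak descendant of $x_\star$, as $e_\star$ contains $x_\star$), I distinguish whether $\ell_\star$ sits in the subtree of $x_\star$ or outside of it. In the \emph{attached above} case, where $\ell_\star$ is not a descendant of $x_\star$, a direct ancestor computation gives
\[ \LCA(\ell_\star, v_\star) \;=\; \LCA(\ell_\star, x_\star) \;=\; \LCA(\ell_\star, q_\star) \;=\; w_\star, \]
because $x_\star, v_\star, q_\star$ all lie in the subtree of $x_\star$ while $\ell_\star$ does not, so they share the same lowest common ancestor with $\ell_\star$; \cref{prop:adding_points_on_path} further ensures $w_\star \in B_\star$. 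In the \emph{below} case, where $\ell_\star$ is a descendant of $x_\star$, the node $\LCA(\ell_\star, v_\star)$ is a weak descendant of $x_\star$ in $T_\star$, but since $x_\star$ is the lowest ancestor of $v_\star$ in $B_\star$, no proper descendant of $x_\star$ in $B_\star$ is itself an ancestor of $v_\star$; hence the lowest weak ancestor of both $\ell_\star$ and $v_\star$ that lies in $B_\star$ collapses to $x_\star$ itself.

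Combining the two sides, the four candidates to be examined are: $(w_1, w_2)$ heaviest over points with both $w_\star$ a proper ancestor of $x_\star$ (Case 1); $(w_1, x_2)$ heaviest over points with $w_1$ a proper ancestor of $x_1$ and $w_2$ a weak descendant of $x_2$ (Case 2); the symmetric $(x_1, w_2)$ (Case 3); and $(x_1, x_2)$ whenever some point of $D[e_1, e_2]$ has both $w_\star$ a weak descendant of $x_\star$, witnessing the existence of a suitable label (Case 4). \cref{alg:preprocessing} inserts a point for every pair of same-label leaves whose two query paths include $B_1$ and $B_2$, and the inducing label of the restricted HIA is automatically such a pair, since $r_\star \in B_\star$ forces $\ell_\star$ to descend from the root of $B_\star$; consequently no candidate is missed.

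The main obstacle is Case 4 (and the analogous halves of Cases 2 and 3): although the inducing label $\ell$ may sit arbitrarily deep below $x_\star$ in $T_\star$, the twin constraints $r_\star \in B_\star$ and $r_\star$ is an ancestor of $v_\star$ collapse the optimum to $r_\star = x_\star$, a fact that has to be carefully extracted from the minimality of $x_\star$. A secondary check is that each of the four case-conditions on a point $(w_1, w_2) \in D[e_1, e_2]$ is readable directly from its two coordinates by ancestor/descendant tests against $x_1$ and $x_2$, which is immediate; it is precisely this encoding that will enable the subsequent postprocessing step and its reduction to predecessor queries.
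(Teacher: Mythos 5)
Your proof is correct and follows essentially the same route as the paper's: the same four-way case split on whether the inducing label lies below or is attached above each of $x_1,x_2$, the same candidate pair assigned to each case, and the same closing observation that the label inducing the optimal $(r_1,r_2)$ must have inserted a point into $D[e_1,e_2]$ because $B_1,B_2$ lie on the query paths from $\ell_1,\ell_2$. Your version is somewhat more explicit than the paper's about \emph{why} each case yields the stated candidate (the chain $\LCA(\ell_\star,v_\star)=\LCA(\ell_\star,x_\star)=\LCA(\ell_\star,q_\star)=w_\star$ in the attached-above case, and the ``collapse to $x_\star$'' argument in the below case), while the paper spends more words on the complementary point that, by \cref{prop:adding_points_on_path}, all relevant nodes lie on a single root-to-$q_\star$ path so that preorder comparison against $\pre(x_\star)$ decides ancestry and turns each case into a rectangular query---a step you acknowledge but leave compressed. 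These are presentational differences only; the substance matches.
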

\begin{proof}
By Property~\ref{prop:adding_points_on_path},
for every two points $(\pre(w_1),\pre(w_2))$ and $(\pre(w_1'),\pre(w_2'))$ added to $D[e_1,e_2]$, we have that $w_1$ is either a weak ancestor or a descendant of $w_1'$, and similarly for $w_2$ and~$w_2'$.
Hence, the preorder numbers of the nodes correspond to their depths and we can check the ancestry relation by comparing them: for nodes $u,u'$ on a path, $u$ is a weak ancestor of $u'$ if and only if $\pre(u)\leq \pre(u')$.

 Using this property, we show how to reduce each of the four cases listed above to finding a specific point in a particular rectangular subset of points.
 For now, we ignore the efficiency of the queries (a trivial implementation takes linear time) and focus on showing that the correct answer to the restricted HIA query can be retrieved from $D[e_1,e_2]$ before the postprocessing.
 
\enumcases{ 

\item
Every leaf $\ell$ that is attached above $x_1$ and $x_2$ in nodes $w_1$ and $w_2$ makes the pair $(w_1,w_2)$ a candidate result of the restricted HIA query.
Hence we need to find a point $(x,y)$ in $D[e_1,e_2]$ such that $x<\pre(x_1)$, $y< \pre(x_2)$, and $\weight(T_1[x])+\weight(T_2[y])$ is maximum.
Then, $(T_1[x],T_2[y])$ is a restricted HIA candidate pair for $(v_1,v_2)$.

\item
We need to find a point $(x,y)\in D[e_1,e_2]$ such that $x<\pre(x_1)$, $y\geq \pre(x_2)$ and $\weight(T_1[x])$ is maximized.
Then, $(T_1[x],x_2)$ is a restricted HIA candidate pair for $(v_1,v_2)$.

\item
This case is symmetric to Case 2.
We need to find a point $(x,y)\in D[e_1,e_2]$ such that $x\geq \pre(x_1)$, $y< \pre(x_2)$ and $\weight(T_2[y])$ is maximized.
Then, $(x_1,T_2[y])$ is a restricted HIA candidate pair for $(v_1,v_2)$.

\item 
We need to check if there exists a point $(x,y)$ such that $x\geq \pre(x_1)$ and $y\geq \pre(x_2)$.
If so, the pair $(x_1,x_2)$ is a restricted HIA candidate pair for $(v_1,v_2)$.
}

In each of the cases, we return a pair, if one exists, of induced ancestors of $(x_1,x_2)$ and hence also of $(v_1,v_2)$.
The label $\ell$ of leaves $\ell_1$ and $\ell_2$ that induces the pair $(r_1,r_2)$ of heaviest induced ancestors of $(v_1,v_2)$ in $B_1 \times B_2$ inserted the point to $D[e_1,e_2]$, since $B_1$ and $B_2$ are on the query paths from~$\ell_1$ and~$\ell_2$, respectively.
Hence, the pair $(r_1,r_2)$ is found while considering one of the four cases.
\end{proof}
\noindent
Now, we show that it suffices to run the above algorithm only for the points in $D[e_1,e_2]$ after the postprocessing stage.

\begin{lemma}
 The answer to \restrictedhia$(v_1,v_2,B_1,B_2,x_1,x_2)$ can be retrieved from the information stored in  $D[e_1,e_2]$ after the postprocessing.
\end{lemma}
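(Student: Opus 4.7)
The plan is to show that removing Pareto-dominated points during postprocessing does not lose any information needed by the case analysis of \cref{le:restricted-hia-before-postprocessing}: whenever a point that would have been the optimum in some case gets removed, the dominating point supplies an equally good candidate (possibly via a different case). The crucial structural observation I would lean on is that, by \cref{prop:adding_points_on_path}, every point of $D[e_1,e_2]$ has its $x$-coordinate on a single root-of-heavy-tree to bottom-of-$e_1$ path, and analogously for its $y$-coordinate; in particular $x_1$ and $x_2$ lie on these two paths. Because children were ordered so that heavy edges come last, on each such path the preorder number strictly increases with depth, and therefore with weight. Consequently, if $(x',y')$ dominates $(x,y)$ in the sense $x\le x'$, $y\le y'$, $(x,y)\ne (x',y')$, then $T_1[x']$ is a weak descendant of $T_1[x]$, $T_2[y']$ is a weak descendant of $T_2[y]$, and the weight inequalities $\weight(T_1[x'])\ge \weight(T_1[x])$ and $\weight(T_2[y'])\ge \weight(T_2[y])$ hold; moreover, comparing coordinates with $\pre(x_1)$ and $\pre(x_2)$ correctly identifies whether a point sits strictly above or weakly below $x_1$ and $x_2$ on the respective paths.

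With this in place I would proceed case by case, illustrating with Case~1, which is the most delicate. Suppose the Case~1 optimum $(x,y)$ with $x<\pre(x_1)$ and $y<\pre(x_2)$ is dominated by $(x',y')$. If $x'<\pre(x_1)$ and $y'<\pre(x_2)$, then $(x',y')$ is itself a Case~1 candidate of weight at least $\weight(T_1[x])+\weight(T_2[y])$. If $x'\ge\pre(x_1)$ and $y'<\pre(x_2)$, then $(x',y')$ is a Case~3 witness producing the candidate $(x_1,T_2[y'])$ of weight $\weight(x_1)+\weight(T_2[y'])\ge \weight(T_1[x])+\weight(T_2[y])$, using that $T_1[x]$ is a weak ancestor of $x_1$. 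The subcase $x'<\pre(x_1),\ y'\ge\pre(x_2)$ is symmetric via Case~2. Finally, if $x'\ge\pre(x_1)$ and $y'\ge\pre(x_2)$, then $(x',y')$ witnesses Case~4 and produces $(x_1,x_2)$, of weight $\weight(x_1)+\weight(x_2)\ge \weight(T_1[x])+\weight(T_2[y])$. Cases~2 and~3 are handled analogously, with only two subcases each (since only one coordinate is lower-bounded), and Case~4 is immediate, as a dominating point trivially still satisfies both lower bounds.

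The main hurdle I foresee is really just bookkeeping: verifying, in every boundary subcase, that the region into which the dominating point lands indeed yields a candidate pair with provably no smaller weight. This always reduces to combining monotonicity of weights along a downward path with the fact that once a coordinate crosses the $\pre(x_\star)$ threshold, substituting $x_\star$ for the corresponding candidate node can only increase the weight. Putting these pieces together shows that the maximum over the four case candidates computed from the postprocessed $D[e_1,e_2]$ equals the one from \cref{le:restricted-hia-before-postprocessing}, which proves the lemma.
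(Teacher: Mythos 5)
Your proof is correct and follows essentially the same route as the paper: set up the key observation that, by Property~\ref{prop:adding_points_on_path} and the child reordering, the $x$- (resp.\ $y$-) coordinates of all points in $D[e_1,e_2]$ lie on a single path containing $x_1$ (resp.\ $x_2$), so preorder order coincides with depth order and hence with weight order; then argue that a Pareto-dominating point yields an equally good candidate, possibly through a different case. The paper's version is a bit more compact---it takes $w$ to be the point behind the overall best candidate (not each per-case optimum) and handles the ``same case'' and ``different case'' possibilities in one shot, the latter as a contradiction with optimality---whereas you enumerate the subcases explicitly, but these are the same argument. One small point worth making explicit in either write-up: the dominating point $(x',y')$ should be taken to be one that itself survives postprocessing (i.e.\ a Pareto-maximal dominator), which always exists; both you and the paper elide this, but it is a one-line fix.
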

\begin{proof}
As discussed in the proof of Lemma~\ref{le:restricted-hia-before-postprocessing}, for any two points $(\pre(w_1),\pre(w_2))$ and $(\pre(w_1'),\pre(w_2'))$ added to $D[e_1,e_2]$, $w_i$ and $w_i'$ lie on a single root-to-leaf path, so their preorder numbers are in the same order as their depths in the tree.
Recall that the trees are monotonically weighted, that is, the weights along each root-to-leaf path are increasing, so we have that $\pre(w_i)\leq \pre(w_i')$ implies $\weight(w_i)\leq \weight(w_i')$.

Let $w=(\pre(w_1),\pre(w_2))$ be the point corresponding to the answer found by the algorithm presented in Lemma~\ref{le:restricted-hia-before-postprocessing}. Note that the returned induced pair of ancestors is not necessarily $(w_1,w_2)$, e.g., it can be $(w_1,x_2)$.
Suppose that $w$ was removed during the postprocessing phase. If so, it happened because there exists a point $w'=(\pre(w_1'),\pre(w_2'))$ where $\pre(w_\star)\leq \pre(w_\star')$ and $w\ne w'$. 
If $w'$ is processed in a different case than $w$, then the pair of ancestors corresponding to $w'$ has a larger total weight than the one returned, yielding a contradiction.
If $w'$ is processed in the same case as $w$, then the pair of ancestors corresponding to $w'$ gives a pair of ancestors whose total weight is not smaller than that of the returned pair.
Hence, the reduction presented in the proof of Lemma~\ref{le:restricted-hia-before-postprocessing} still holds for the set $D[e_1,e_2]$ after the postprocessing.
\end{proof}
\noindent
Next, we present how to implement each of the four cases in Lemma~\ref{le:restricted-hia-before-postprocessing} efficiently using the fact that the points in $D[e_1,e_2]$ have been postprocessed.

\begin{lemma}\label{le:queries-after-postprocessing}
 The answer to \restrictedhia$(v_1,v_2,B_1,B_2,x_1,x_2)$ can be retrieved from the information stored in  $D[e_1,e_2]$ after the postprocessing, with two predecessor queries: one for $\pre(x_1)$ and one for $\pre(x_2)$.
\end{lemma}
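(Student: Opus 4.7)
The plan is to leverage \cref{prop:monotonicity}: after postprocessing, listing the points of $D[e_1,e_2]$ as $p_1,p_2,\ldots,p_s$ in order of increasing $x$-coordinate, their $y$-coordinates are strictly decreasing. Consequently, each of the two independent threshold conditions used in the case analysis of \cref{le:restricted-hia-before-postprocessing} reduces to a single cut in this common linear order: the points satisfying $x < \pre(x_1)$ are exactly a prefix $p_1,\ldots,p_m$, while the points satisfying $y \geq \pre(x_2)$ are exactly a prefix $p_1,\ldots,p_k$.

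First I would recover the two boundary indices $m$ and $k$ from the two allowed predecessor queries. A predecessor query on $D_x[e_1,e_2]$ with value $\pre(x_1)$ (with a routine adjustment for strict versus non-strict inequality) returns the largest stored $x$-coordinate below the threshold, whose rank in the sorted-by-$x$ order is $m$; symmetrically, a predecessor query on $D_y[e_1,e_2]$ with value $\pre(x_2)$ identifies $k$ from the boundary between the $y$-prefix and the $y$-suffix. Storing next to each coordinate its rank in the common sorted order lets me convert either query's output to the corresponding index in $O(1)$ time.

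Next I would translate each of the four cases of \cref{le:restricted-hia-before-postprocessing} into a question about an interval of $p_1,\ldots,p_s$. Case~1 ($x<\pre(x_1)$ and $y<\pre(x_2)$) asks for the maximum of $\weight(T_1[x])+\weight(T_2[y])$ over the contiguous range $[k+1,m]$, which the precomputed \textsc{RangeMaximumQuery} structure answers in $O(1)$ time. Case~2 ($x<\pre(x_1)$ and $y\geq\pre(x_2)$) restricts to $[1,\min(k,m)]$ and asks to maximize $\weight(T_1[x_i])$; by \cref{prop:adding_points_on_path} the relevant $x$-coordinates lie on a single root-to-leaf path of $T_1$, so larger preorder number implies greater depth and hence greater weight, and the maximum is attained at position $\min(k,m)$ and read off in $O(1)$. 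Case~3 is symmetric and returns the point at the leftmost position of $[\max(k,m)+1,s]$, while Case~4 is the non-emptiness test for $[m+1,k]$, i.e.\ the comparison $m<k$, after which $(x_1,x_2)$ itself is the candidate pair.

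The main conceptual hurdle is to confirm that two predecessor queries genuinely suffice, and this rests on two already-established facts: \cref{prop:monotonicity} guarantees that both thresholding conditions cut the sorted sequence at a single index each, so every target region in the case analysis is a contiguous interval determined entirely by $m$ and $k$; and \cref{prop:adding_points_on_path} guarantees that the one-dimensional weight maximizations of Cases~2 and~3 collapse to endpoint lookups along a single root-to-leaf path rather than requiring additional RMQs. Taking the heaviest of the four candidate pairs produced this way yields the answer to the restricted HIA query in $O(1)$ time plus the two predecessor queries, as promised.
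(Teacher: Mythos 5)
Your proposal is correct and follows essentially the same route as the paper's proof: after Property~\ref{prop:monotonicity} reduces each of the two threshold conditions to a single cut index in the common sorted order, Case~1 is an RMQ over the resulting interval, Cases~2 and~3 collapse to endpoint lookups via the monotonicity of weight along the single root-to-leaf path guaranteed by Property~\ref{prop:adding_points_on_path}, and Case~4 is an emptiness test. Your version merely makes the interval endpoints $m$ and $k$ explicit where the paper works with the intervals $\II_x^{\cdot}$ abstractly.
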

\begin{proof}
By computing the predecessor of $\pre(x_1)$ in $D_x[e_1,e_2]$, we obtain intervals $\II_x^{x<\pre(x_1)}$ and $\II_x^{x\geq \pre(x_1)}$ of $D_x[e_1,e_2]$.
Similarly, intervals $\II_y^{y<\pre(x_2)}$ and $\II_y^{y\geq \pre(x_2)}$ of $D_y[e_1,e_2]$ are obtained by computing the predecessor of $\pre(x_2)$ in $D_y[e_1,e_2]$.
Note that by Property~\ref{prop:monotonicity}, points from $\II_y^{y\geq \pre(x_2)}$ (resp. $\II_y^{y< \pre(x_2)}$) of $D_y[e_1,e_2]$ correspond to points from the interval of $D_x[e_1,e_2]$ that we denote $\II_x^{y\geq \pre(x_2)}$ ($\II_x^{y<\pre(x_2)}$).
Hence, we can translate each of the conditions on points in the cases of Lemma~\ref{le:restricted-hia-before-postprocessing} to an intersection $\II_x^{\text{Case }i}$ of two intervals on $D_x[e_1,e_2]$.
This reduces each of the four cases to:

\enumcases{

\item
Find the point with maximum weight $\weight(T_1[x])+\weight(T_2[y])$ in $\II_x^{\text{Case }1}$ using an RMQ.

\item
By the monotonicity of weights with respect to $x$-coordinates, the point with maximum weight $\weight(T_1[x])$ in $\II_x^{\text{Case }2}$ is the rightmost element of $\II_x^{\text{Case }2}$.

\item
By the monotonicity of weights with respect to $y$-coordinates and Property~\ref{prop:monotonicity}, the point with maximum weight $\weight(T_2[y])$ in $\II_x^{\text{Case }3}$ is the leftmost element of~$\II_x^{\text{Case }3}$.

\item
It suffices to check if $\II_x^{\text{Case }4}$ is non-empty.\qedhere
}
\end{proof}

\begin{lemma}
 The answer to \restrictedhia$(v_1,v_2,B_1,B_2,x_1,x_2)$ can be retrieved from the information stored in $D[e_1,e_2]$ after the postprocessing, with two predecessor queries: one for $\pre(v_1)$ and one for $\pre(v_2)$.
\end{lemma}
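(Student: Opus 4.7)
The plan is to show that, for each $\star \in \{1,2\}$, the predecessor of $\pre(v_\star)$ in $D_x[e_1,e_2]$ (respectively $D_y[e_1,e_2]$ for $\star = 2$) coincides with the predecessor of $\pre(x_\star)$ in the same sorted list, so that the algorithm of Lemma~\ref{le:queries-after-postprocessing} can be invoked verbatim after swapping the query values. Since $x_\star$ is an ancestor of $v_\star$ in $T_\star$, we trivially have $\pre(x_\star) \le \pre(v_\star)$, so it suffices to argue that no value $\pre(w_\star)$ stored in $D_x[e_1,e_2]$ (resp.\ $D_y[e_1,e_2]$) lies in the half-open interval $(\pre(x_\star), \pre(v_\star)]$.

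To establish this, fix a point $(\pre(w_1), \pre(w_2))$ in $D[e_1,e_2]$ and focus on $w_\star$. By Property~\ref{prop:adding_points_on_path}, $w_\star$ lies on $e_\star$ or on the path from the highest node of $e_\star$ up to the root of $B_\star$; in particular, $w_\star$ is comparable with $x_\star$ in the ancestor relation. If $w_\star$ is a weak ancestor of $x_\star$, then $\pre(w_\star) \le \pre(x_\star)$ and the value is outside the forbidden interval.

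The interesting case is when $w_\star$ is a proper descendant of $x_\star$. Property~\ref{prop:adding_points_on_path} forces $w_\star$ to still lie on the downward path through $e_\star$, so $w_\star$ is reached from $x_\star$ by first traversing a heavy edge on $e_\star$. Here the child ordering pays off: we fixed the order of children so that the children sharing the heavy tree with the parent are last in preorder. Therefore the whole subtree rooted at this heavy child of $x_\star$ is visited strictly after every light subtree of $x_\star$. If $v_\star = x_\star$, we immediately get $\pre(v_\star) = \pre(x_\star) < \pre(w_\star)$. Otherwise $v_\star$ is a proper descendant of $x_\star$, and the edge from $x_\star$ toward $v_\star$ must be light (if it were heavy, its lower endpoint would still lie in $B_\star$, contradicting the choice of $x_\star$ as the lowest ancestor of $v_\star$ in $B_\star$); hence $v_\star$ sits in a light subtree of $x_\star$, so $\pre(v_\star) < \pre(w_\star)$. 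Either way $\pre(w_\star) > \pre(v_\star)$, and again $\pre(w_\star) \notin (\pre(x_\star), \pre(v_\star)]$.

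The main subtle step is the preorder analysis above: the equivalence $\pre(w_\star) \le \pre(x_\star) \Leftrightarrow \pre(w_\star) \le \pre(v_\star)$ is exactly what the child reordering (heavy children last) was engineered to provide, as previewed in the techniques section. Once this equivalence is in hand, the two required predecessor queries can be phrased solely in terms of $\pre(v_1)$ and $\pre(v_2)$, independent of the pair $(B_1,B_2)$ and of $(x_1,x_2)$; this uniformity across restricted HIA calls is the prerequisite for the fractional-cascading speedup deployed in Section~\ref{subsec:fractional_cascading}.
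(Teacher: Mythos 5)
Your proof is correct and follows essentially the same route as the paper's: reduce to showing that the predecessor of $\pre(v_\star)$ equals the predecessor of $\pre(x_\star)$ in $D_x[e_1,e_2]$ (resp.\ $D_y$), and justify this by the child-ordering convention (heavy-tree children last in preorder), which places $\pre(v_\star)$ strictly before the heavy child $x_\star'$ of $x_\star$ on $e_\star$ and hence before every $\pre(w_\star)$ that is a proper descendant of $x_\star$ in $D_x[e_1,e_2]$. The paper states this more compactly as $\pre(x_1)\le\pre(v_1)<\pre(x_1')$; your explicit ``no value in the half-open interval $(\pre(x_\star),\pre(v_\star)]$'' formulation is an equivalent phrasing of the same argument.
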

\begin{proof}
 Recall that in the approach presented in Lemma~\ref{le:queries-after-postprocessing} we compute the predecessor of $\pre(x_1)$ in $D_x[e_1,e_2]$ in order to divide $D_x[e_1,e_2]$ into $\II_x^{x<\pre(x_1)}$ and $\II_x^{x\geq \pre(x_1)}$ and that all elements in $D_x[e_1,e_2]$ are of the form $\pre(w_1)$ for a node $w_1$ on the path from the lowest node on $e_1$ to the root of $B_1$.

We consider only $v_1$ and $x_1$ as the analysis for $v_2$ and $x_2$ is symmetric.
We can focus on the case where $v_1 \neq x_1$, as the other case is immediate.
We clearly have that $\pre(v_1)\geq \pre(x_1)$ as $v_1$ is a descendant of $x_1$.
 Recall that in the preprocessing stage, we reordered children of every node in such a way that children connected by a light edge are before those connected by a heavy edge, so if there is a child $x_1'$ of $x_1$ on $e_1$ we have $\pre(x_1) \leq \pre(v_1) < \pre(x_1')$, as $v_1$ is a descendant of a light child of $x_1$ (by the definition of $x_1$).
Hence, the predecessor of $\pre(v_1)$ is the same as the predecessor of $\pre(x_1)$ in $D_x[e_1,e_2]$.
\end{proof}

\noindent
This concludes the proof of Theorem~\ref{thm:main_data_structure}.

Finally, by setting the value of $b$ to $\lfloor \log n \rfloor$, we obtain an $\Ohtilde(n)$-space data structure capable of answering restricted HIA queries in constant time plus the time required for answering two predecessor queries: one for $\pre(v_1)$ and one for $\pre(v_2)$.
Algorithm~\ref{alg:hia} performs $\Oh(\log n/\log b)$ restricted HIA queries in order to answer an HIA query, so the total time required is $\Oh(\log n)$.
However, as all the predecessor queries ask about one of two values in different lists that are related to each other, we can make use of fractional cascading.

\subsection{Fractional Cascading}
\label{subsec:fractional_cascading}

For most of the cases described in the previous subsection, our structures
are issuing predecessor queries.
This is the only reason why the time complexity of an HIA query with our approach
is not yet $\cO(\log n / \log \log n)$.
We will exploit the fact that all these queries look for the same target value
($\pre(v_1)$ for structures built for $T_1$ and $\pre(v_2)$ for structures
for $T_2$) but for different pairs of branches, which enables us to use
fractional cascading.

We can think of creating two catalog graphs from $T_1 \times T_2$ with nodes
representing pairs $(e_1, e_2)$ of branches, storing the contents of
$D_x[e_1, e_2]$ in one catalog graph and those of $D_y[e_1, e_2]$ in the other one.
The execution of Algorithm \ref{alg:hia} can be then seen as the traversal of a path in such a
graph where, for a pair $(B_1[i],B_2[j])$ of heavy trees for which a restricted HIA query is performed
by the algorithm, we query the catalogs of the nodes representing the
pair of branches $(e_1, e_2)$ that contain the lowest weak ancestors of
$(v_1, v_2)$ that are in $B_1[i]$ and $B_2[j]$, respectively.
The problem with this direct approach is that it is not guaranteed that
the degree of all vertices in each catalog graph is polylogarithmic: we might need
to move from the node corresponding to two branches $(e_{1},e_{2})$ to any
node corresponding to two branches $(e'_{1},e'_{2})$, where the heavy tree containing $e'_{1}$
is attached to $e_{1}$, and there could be even $\Omega(n)$ such branches $e'_{1}$.

We need to create catalog graphs in which the length of the considered path for
each HIA query is
$\cO(\log n / \log \log n)$, while the degree of each node is
$\cO(\polylog n)$ in order to be able to apply the result of Shi and JáJá \cite{quingminjaja}.
This would ensure that all predecessor queries in $D_x[\cdot, \cdot]$
and $D_y[\cdot, \cdot]$ take constant time,
apart from the first ones, which take
$\cO(\log \log n)$ time using an $x$-fast trie.
We describe how to build the appropriate catalog graphs below.
As the shape of the graph for $D_x[\cdot, \cdot]$ and $D_y[\cdot, \cdot]$ is the
same and only the contents of catalogs differ, we will only describe how to build one of them.

We preprocess $T_1$ and $T_2$ separately, first to compute trees $B(T_\star)$
and then to build catalog graphs $C(T_\star)$. From this,
we build the catalog graph $C$ that can be seen as a Cartesian product of
$C(T_1)$ and $C(T_2)$.
More precisely, each node in $C$ is a pair $(v, w)$ for $v \in C(T_1)$ and
$w \in C(T_2)$.
For an edge between nodes $v_1$ and $v_2$ in $C(T_1)$,
in the final catalog graph, we create edges between nodes $(v_1, w)$ and
$(v_2, w)$ for each $w \in C(T_2)$.
Analogically, for an edge between nodes $w_1$ and $w_2$ in $C(T_2)$, in the
final catalog graph we create edges between nodes $(v, w_1)$ and $(v, w_2)$ for
each $v \in C(T_1)$.
This way, if the degrees of $C(T_1)$ and $C(T_2)$ are polylogarithmic, so is the
degree of $C$.

We now explain how to build tree
$B(T_\star)$ from $T_\star$.
$B(T_\star)$ contains nodes representing heavy trees (called
\emph{heavy tree nodes}) and nodes representing branches (called
\emph{branch nodes}):
\begin{itemize}
  \setlength{\itemsep}{1pt}
  \item for each heavy tree $H$, we connect all branch nodes representing
  branches in $H$ as children of the heavy tree node representing $H$,
  \item for each heavy tree $H$, except the tree containing the root of
  $T_\star$, we connect the heavy tree node representing $H$ as child of the
  branch node representing the branch containing the parent of the root of $H$.
\end{itemize}

\begin{proposition}
The depth of $B(T_\star)$ is $\cO(\log n / \log \log n)$ and each heavy tree
node has $\cO(\log n)$ children.
\label{prop:btree}
\end{proposition}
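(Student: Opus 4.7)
The plan is to verify the two claimed bounds by unwinding the definition of $B(T_\star)$ and invoking \cref{lemma:layers_count} with the intended setting $b = \lfloor \log n \rfloor$.

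First, I would bound the number of children of a heavy tree node. By construction, the children of the heavy tree node representing a heavy tree $H$ are precisely the branch nodes representing the branches contained in $H$. By \cref{lemma:layers_count}, each heavy tree has $\cO(b)$ branches, which with $b = \lfloor \log n \rfloor$ gives $\cO(\log n)$ children, matching the second claim.

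For the depth bound, I would use the fact that any root-to-leaf path in $B(T_\star)$ strictly alternates between heavy tree nodes and branch nodes, starting at the heavy tree node representing the heavy tree containing the root of $T_\star$. The key observation is that whenever such a path descends from a branch node to a heavy tree child, it enters a new heavy tree $H'$ whose root is attached by a light edge to its parent (which lies in a strictly shallower heavy tree); consequently, $H'$ belongs to a strictly deeper layer of $T_\star$. Hence the number of distinct heavy tree nodes on any root-to-leaf path is at most the number of layers of $T_\star$, which \cref{lemma:layers_count} bounds by $\cO(\log n / \log b) = \cO(\log n / \log \log n)$. Since branch nodes only occur interleaved with heavy tree nodes, the overall depth is at most twice this, and hence still $\cO(\log n / \log \log n)$.

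There is essentially no obstacle here: once the alternating heavy-tree/branch-node structure of $B(T_\star)$ is made explicit, both claims follow by direct counting together with \cref{lemma:layers_count}. I note that this proposition deliberately says nothing about the degree of branch nodes, which can be as large as $\Omega(n)$ since a single branch may have arbitrarily many light children; the degree reduction needed to make the Cartesian product catalog graph polylogarithmic-degree is left to the subsequent construction mentioned in the introduction.
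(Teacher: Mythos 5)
Your proof is correct and takes essentially the same approach as the paper, which only states (after the proposition, without a formal proof environment) that every heavy tree has $\cO(b) = \cO(\log n)$ branches and that root-to-leaf paths in $B(T_\star)$ alternate between heavy tree nodes and branch nodes; you fill in exactly the two ingredients the paper leaves implicit, namely the invocation of \cref{lemma:layers_count} and the observation that each branch-node-to-heavy-tree-node step strictly increases the layer, so the number of heavy tree nodes on any root-to-leaf path is bounded by the number of layers.
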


Recall that every heavy tree has $\cO(b)=\cO(\log n)$ branches, so every heavy tree node has $\cO(\log n)$ children.
Any root-to-leaf path in $B(T_\star)$ alternates between heavy tree nodes and
branch nodes. For any root-to-$v$ path $p$ in $T_\star$, there is a corresponding path
$p'$ in $B(T_\star)$ that visits the heavy tree nodes that correspond to the
heavy trees that intersect $p$ and the branch nodes for which Algorithm \ref{alg:hia}
(when called for a pair of nodes containing $v$)
could call predecessor queries for $D_x[\cdot, \cdot]$ and $D_y[\cdot, \cdot]$.
For any $p$, $p'$ is of length $\cO(\log n / \log \log n)$ and
can be found in $\cO(|p'|)$ time by following the path from the heavy tree containing $v$
to the root of $B(T_\star)$.

We now describe how to create a catalog graph $C(T_\star)$ from $B(T_\star)$.
The construction is recursive and follows from the proof of Lemma
\ref{lemma:gadget}.
The idea is to replace the structure of children of branch nodes having too many
children with appropriate gadgets that roughly preserve the structure of the
tree, do not increase the depth of the tree asymptotically and reduce the
degree of each node to $\cO(\polylog n)$.
Due to Proposition \ref{prop:btree}, we do not need to alter the structure of
children for heavy tree nodes.

\begin{lemma}
  For any tree $B(T_\star)$ with $\cO(n)$ nodes and depth
  $\cO(\log n / \log \log n)$,
  there is a tree $C(T_\star)$ satisfying all the following conditions:
  \begin{itemize}
    \setlength{\itemsep}{1pt}
    \item $C(T_\star)$ has $\cO(n)$ nodes,
    \item all nodes of $C(T_\star)$ have degree $\cO(\log n)$,
    \item $C(T_\star)$ has depth $\cO(\log n / \log \log n)$,
    \item for each simple path $p$ in $B(T_\star)$, there is an $\cO(\log n / \log \log n)$-length simple path $p'$ in
    $C(T_\star)$, which can be computed in $\cO(|p'|)$ time, such that $p$ is
    a subsequence of $p'$.
  \end{itemize}
  \label{lemma:gadget}
\end{lemma}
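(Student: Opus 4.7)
The plan is to obtain $C(T_\star)$ from $B(T_\star)$ by, at every node $v$ whose number of children exceeds $\log n$, replacing the star at $v$ by a rooted gadget $G_v$ satisfying the following four properties: (i) $G_v$ is rooted at $v$ itself and its leaves are exactly the original children of $v$; (ii) every internal node of $G_v$ has at most $\log n$ children; (iii) $G_v$ has $\Oh(k)$ nodes, where $k$ is the number of children of $v$; and, crucially, (iv) $G_v$ is weight-biased so that the depth in $G_v$ of the leaf representing a child $c$ of $v$ is $\Oh(\log_{\log n}(w(v)/w(c)))$, where $w(u)$ denotes the number of nodes in $u$'s subtree in $B(T_\star)$. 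Such gadgets are standard; for instance, (iv) is achieved by an arity-$\lfloor \log n \rfloor$ Shannon--Fano code, or a weight-balanced $B$-tree, on the leaves with weights $w(\cdot)$.

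With the gadgets in place, $|C(T_\star)| = \Oh(\sum_v k_v) = \Oh(n)$ by (iii) and each node of $C(T_\star)$ has degree $\Oh(\log n)$ by (ii). For the depth bound, consider any root-to-leaf path $v_0, v_1, \dots, v_D$ in $B(T_\star)$; by assumption $D = \Oh(\log n / \log\log n)$. Its image $p'$ in $C(T_\star)$ is formed by inserting, between $v_i$ and $v_{i+1}$, the internal $v_i$-to-$v_{i+1}$ segment of $G_{v_i}$, which has length $\Oh(\log_{\log n}(w(v_i)/w(v_{i+1})))$ by (iv). The sum over $i$ telescopes to $\Oh(\log_{\log n}(w(v_0)/w(v_D))) = \Oh(\log n/\log\log n)$, so $|p'| = \Oh(\log n/\log\log n)$. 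For a general simple path in $B(T_\star)$, one splits it at its topmost node and applies the telescoping argument to each of the two descending subpaths.

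Because $v$ is the root of $G_v$ by (i), every edge $(v, c)$ of $B(T_\star)$ is faithfully replaced on $p'$ by the $v$-to-$c$ path in $G_v$; concatenating these replacements preserves the order of the original nodes of $p$ inside $p'$, so $p$ is a subsequence of $p'$. Computing $p'$ from $p$ in $\Oh(|p'|)$ time uses parent pointers stored inside each gadget: for every edge $(v, c)$ of $p$, we walk up from $c$ to $v$ in $G_v$ in time proportional to its length and reverse if needed, which is $\Oh(|p'|)$ overall.

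The main technical obstacle is property (iv). Without weight-biasing, a uniformly balanced arity-$\lfloor \log n \rfloor$ gadget at $v$ would only bound its depth by $\Oh(\log_{\log n} k_v)$; summing this over the $\Oh(\log n / \log\log n)$ nodes on the path could blow up to $\Theta((\log n / \log\log n)^2)$. It is precisely the weight-biasing by subtree sizes in $B(T_\star)$ that makes the contributions telescope and recover the sought $\Oh(\log n / \log\log n)$ bound.
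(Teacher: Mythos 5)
Your proposal is correct and takes essentially the same approach as the paper: both replace high-degree nodes by arity-$\cO(\log n)$ gadgets that are \emph{weight-biased} by subtree sizes so that the extra depth telescopes to $\cO(\log n/\log\log n)$. The paper's concrete construction---marking children that straddle thresholds at multiples of $s(e)/\log n$, grouping consecutive unmarked children into interval nodes, and recursing on large intervals---is one explicit realization of the abstract weight-balanced gadget you posit in property (iv), and the paper's depth analysis (classifying edges as incident to a non-interval node vs.\ between two interval nodes, and noting that the latter shrink subtree size by a $\log n$ factor per step) is equivalent to your telescoping argument. One small technicality: property (iv) should read $\Oh(1 + \log_{\log n}(w(v)/w(c)))$ rather than $\Oh(\log_{\log n}(w(v)/w(c)))$, since the segment has length at least one even when $w(c)$ is close to $w(v)$; the extra $\Oh(1)$ per gadget sums to $\Oh(\log n/\log\log n)$ over the path, so the conclusion is unaffected.
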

\begin{proof}
Consider a (branch) node $e$ of $B(T_\star)$ whose children, read left-to-right, are
heavy tree nodes $h_1, h_2, \ldots, h_k$ for $k>\log n$.
We replace this subgraph that contains $k+1$ nodes with a gadget graph whose root is $e$ and
whose set of nodes is a superset of $\{e\} \cup \{h_i : i \in [k]\}$.

For each node $u$, let $s(u)$ be the number of nodes in the subtree
of $u$ in the considered tree.
Let $s_0 = 0$ and,
for $i \ge 1$, let $s_i$ be the prefix sum $s(h_1) + s(h_2) + \ldots + s(h_i)$.
If there is an integer $\ell$ such that $s_{i-1} < \ell \cdot s(e) / \log n$
and $s_i \ge \ell \cdot s(e) / \log n$, we mark $h_i$.
We call each set of consecutive unmarked nodes an \emph{interval}.
As $\ell\leq \log n$, there are $\cO(\log n)$ marked nodes and $\cO(\log n)$ intervals.

We create a gadget for $e$ (and, recursively, for some other nodes created
in the construction, as described later) as follows:
\begin{itemize}
\item We attach as a child of $e$ every node that is either marked or is the only element of its interval.
\item For each interval of more than one node, we create a new node $i_j$, called an \emph{interval node}, attach it as a child of $e$, and attach all the nodes of the interval as children of $i_j$.
\end{itemize}
We recursively apply the same construction for any of the newly created interval nodes
$i_1, i_2, \ldots, i_m$ whose degree is larger than $\log n$.
See Figure~\ref{fig:gadget} for an illustration.

\begin{figure}[h]
\begin{center}
\scalebox{0.75}{\begin{tikzpicture}
  \draw[ultra thick] (0,-5) -- (0,0) node[right] {$e$};
  \draw[->] (0,-1) -- (-1,-1.5);
  \draw[->] (0,-2.5) -- (-1,-3);
  \node at (-1,-4.4) {$\ldots$};
  \draw[->] (0,-4.5) -- (-1,-5);
  \draw (-1,-1.5) -- (-1.7,-2.5) -- (-0.3,-2.5) -- (-1,-1.5);
  \draw (-1,-3) -- (-1.7,-4) -- (-0.3,-4) -- (-1,-3);
  \draw (-1,-5) -- (-1.7,-6) -- (-0.3,-6) -- (-1,-5);
  \node at (-1,-2.2) {$h_1$};
  \node at (-1,-3.7) {$h_2$};
  \node at (-1,-5.7) {$h_k$};
\end{tikzpicture}

\hspace{2cm}

\begin{tikzpicture}
  \node[draw,diamond,minimum size=0.8cm,inner sep=0cm] (e) at (0,0) {$e$};
  \node[draw,circle,minimum size=0.8cm,inner sep=0cm] (i1) at (-2,-1.5) {$i_1$};
  \node[draw,circle,minimum size=0.8cm,inner sep=0cm] (i3) at (0,-1.5) {$i_2$};
  \node (idots) at (1,-1.5) {$\ldots$};
  \node[draw,circle,minimum size=0.8cm,inner sep=0cm] (im) at (2,-1.5) {$i_m$};
  \node[draw,circle,minimum size=0.8cm,inner sep=0cm] (ii1) at (1.25,-3) {$i_{m,1}$};
  \node[draw,circle,minimum size=0.8cm,inner sep=0cm] (iimp) at (2.75,-3) {$i_{m,m'}$};
  \draw (e) -- (i1);
  \draw (e) -- (i3);
  \draw (e) -- (im);
  \draw (im) -- (ii1);
  \draw (im) -- (iimp);
  \node[draw,rectangle] (e1) at (-4,-3) {$h_1$};
  \node[draw,rectangle] (e2) at (-3.3,-3) {$h_2$};
  \node (edots) at (-2.625,-3) {$\ldots$};
  \node[draw,rectangle] (ej) at (-1.95,-3) {$h_{j_1}$};
  \node[draw,rectangle] (ejp1) at (-0.95,-3) {$h_{{j_1}+1}$};
  \node (edots2) at (0,-3) {$\ldots$};
  \node[draw,rectangle] (ejjp1) at (1,-4.5) {$h_{j_{m-1}+1}$};
  \node (edots3a) at (1,-3.75) {$\ldots$};
  \node (edots3b) at (2,-3.75) {$\ldots$};
  \node (edots3c) at (3,-3.75) {$\ldots$};
  \node (edots3d) at (2,-3) {$\ldots$};
  \node (edots3e) at (2.25,-4.5) {$\ldots$};
  \node[draw,rectangle] (ek) at (3,-4.5) {$h_k$};
  \draw (i1) -- (e1);
  \draw (i1) -- (e2);
  \draw (i1) -- (ej);
  \draw (e) -- (ejp1);
  \draw (ii1) -- (edots3a);
  \draw (iimp) -- (edots3c);
  \draw (edots3a) -- (ejjp1);
  \draw (edots3c) -- (ek);
\end{tikzpicture}}
\caption{On the left there is a branch $e$ of $T_\star$, with outgoing edges to
$h_1, h_2, \ldots, h_k$ in $B(T_\star)$. On the right, there is a catalog graph
gadget created for $e$, which is part of $C(T_\star)$.
Circles denote interval nodes and rectangles heavy tree nodes.
Some of the intervals are recursively replaced	 with the gadget to decrease their degree.
Heavy tree nodes have degree $\cO(\log n)$.
In $C(T_\star)$, $e$'s parent is the heavy tree to which it belongs, while the children of
each $h_i$ are the branches in $h_i$.}
\label{fig:gadget}
\end{center}
\end{figure}
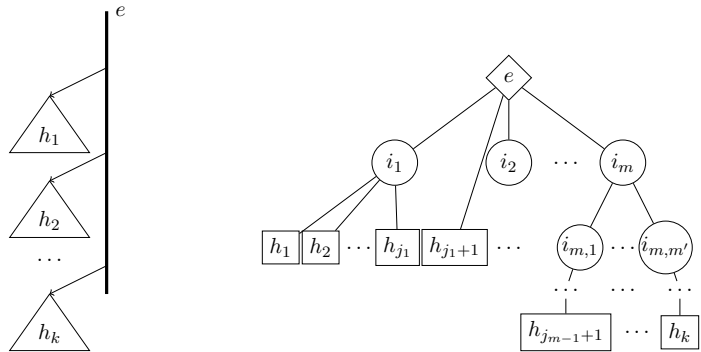

From the construction, it follows that the degree of each node of
$C(T_\star)$ is $\cO(\log n)$ and that the size of $C(T_\star)$ is $\cO(n)$,
as all new nodes are of out-degree at least $2$.

Let $u$ be a node in $T_\star$.
We now show that the depth for a node $e_u \in C(T_\star)$ representing a branch
containing $u$ is $\cO(\log n / \log \log n)$ by considering the edges above $e_u$ in $C(T_\star)$.
The edges can be of two types:
\begin{itemize}
  \setlength{\itemsep}{1pt}
  \item Edges that are incident to at least one node that is not an interval node. By
  the depth of $B(T_\star)$, we have $\cO(\log n / \log \log n)$ such edges.
  \item Edges between interval nodes. For each such edge $(v,z)$,
   we have $s(v) \ge s(z) \cdot \log n$. Thus, similarly to the proof
  of Lemma \ref{lemma:layers_count}, there are
  $\cO(\log n / \log \log n)$ such edges on the path from the root of $C(T_\star)$ to $e_u$.
\end{itemize}
This concludes the proof of the bound on the depth of $C(T_\star)$.

Each simple path $p$ in $B(T_\star)$ naturally corresponds to a
simple path $p'$ in $C(T_\star)$.
In partcular, for each edge $(v,w)$ in $B(T_\star)$, one can explicitly store a path
$C(T_\star)$ to which $(v,w)$ corresponds.
The concatenation of all such paths for edges on $p$ yields a simple path $p'$ in $\cO(|p'|)$ time.
As the depth of $C(T_\star)$ is $\cO(\log n/ \log\log n)$, the bound on $|p'|$ follows.
\end{proof}

From $C(T_1)$ and $C(T_2)$ constructed as in Lemma \ref{lemma:gadget}, we create
the catalog graph $C$ as described earlier.
Only nodes that represent pairs of branches contain non-empty original catalogs.
After the original catalogs are filled, we run the preprocessing of fractional
cascading and appropriate augmented catalogs are created for all nodes in $C$ as
described in \cite{chazelle,quingminjaja}.
The $\cO(\log n / \log \log n)$ predecessor queries coming from restricted HIA queries
performed in Algorithm \ref{alg:hia} are naturally reduced to a constant
number of queries to the $x$-fast tries and the traversal of a path of length
$\cO(\log n / \log \log n)$ in $C$.
This takes $\cO(\log n / \log \log n)$ time in total and concludes the description of our data structure and the proof of Theorem~\ref{thm:final}.

\bibliographystyle{plainurl}
\bibliography{references}

\newpage
\appendix

\section{Description of the $\cO(\log n)$-Query-Time Data Structure of \cite{DBLP:conf/cccg/GagieGN13}}\label{app:fc}

As mentioned in the introduction, Gagie et al.~\cite{DBLP:conf/cccg/GagieGN13} sketched an
$\cO(n \log^2 n)$-size data structure that answers HIA queries in $\cO(\log n)$ time,
in the last paragraph of Subsection 2.1 of their work.
They construct a data structure for each pair of heavy trees of $T_1$ and $T_2$ and reduce an HIA query for nodes $u$ and $v$ to a predecessor query in the data structure of each of $\cO(\log n)$ pairs of heavy trees.
The idea for improving the (fully-described) $\cO(\log n \log\log n)$-time procedure for answering queries with a more efficient one is similar to ours and involves fractional cascading.
They would need to build a catalog graph with nodes being pairs of heavy paths, reduce its degree,
and make sure that the predecessor queries have the same target (by asking for the preorder numbers of $u$ and $v$).
This is similar to what we describe in \cref{subsec:fractional_cascading} for a different tree decomposition.

\end{document}